\documentclass[a4paper,10pt,twocolumn]{article}

\usepackage[dvipsnames]{xcolor}
\usepackage{times}
\usepackage{amsmath}
\usepackage{graphicx}
\usepackage{subcaption}
\usepackage{xspace}
\usepackage[utf8]{inputenc}
\usepackage{ifthen}
\usepackage{balance}

%
\newboolean{extendedtr}
\setboolean{extendedtr}{true}

\newcommand{\extendedtr}[2]{%
\ifextendedtr
#1
\else
#2
\fi
}

\ifextendedtr
\usepackage[total={7in,9.5in}]{geometry}
\setlength{\columnsep}{.2in}
\else
\fi

\usepackage{hyperref}
\hypersetup{
    colorlinks,
    linkcolor={red!50!black},
    citecolor={blue!50!black},
    urlcolor={blue!80!black}
}

\usepackage{cleveref}
\crefname{section}{§}{§§}
\Crefname{section}{§}{§§}

\graphicspath{ {./figures/} }

\usepackage{amsfonts}
\usepackage{amsmath}
\usepackage{amssymb}
\usepackage{pifont}


\usepackage[inline]{enumitem}
\usepackage[labelfont=bf]{caption}

\ifextendedtr
\newtheorem{theorem}{Theorem}
\newtheorem{proof}{Proof}
\else
\fi

\usepackage[utf8]{inputenc}
\usepackage[utf8]{luainputenc}


\bibliographystyle{abbrv}

\makeatletter
\def\blx@maxline{77}
\makeatother

\renewcommand\paragraph[1]{\vspace{.6\baselineskip}{\bf #1}.}
\newcommand\smallparagraph[1]{\vspace{.3\baselineskip}{\bf #1}.}


\newcommand{\eg}{{e.g.},\xspace}
\newcommand{\ie}{{i.e.},\xspace}


\newcommand\code[1]{\texttt{#1}}

\newcommand\theTR{extended technical report~\cite{kpg-tr}\xspace}


\newcommand{\sys}{DD\xspace}

\newcommand{\rom}[1]{(\textit{\lowercase\expandafter{\romannumeral #1\relax})}}

\newcommand{\arrange}{\texttt{arrange} }

\definecolor{ao}{rgb}{0.0, 0.75, 0.0}
\definecolor{orange}{rgb}{0.88, 0.46, 0.25}

\ifextendedtr
\else
\vldbTitle{Shared Arrangements: practical inter-query sharing for streaming dataflows}
\vldbAuthors{Frank McSherry, Andrea Lattuada, Malte Schwarzkopf, Timothy Roscoe}
\vldbDOI{https://doi.org/10.14778/3401960.3401974}
\vldbVolume{13}
\vldbNumber{10}
\vldbYear{2020}
\fi

\newcommand{\seclabel}[1]{\label{sec:#1}}
\newcommand{\secref}[1]{\S\ref{sec:#1}}
\newcommand{\subseclabel}[2]{\label{sec:#1:#2}}
\newcommand{\subsecref}[2]{\S\ref{sec:#1:#2}}

\newcommand{\tableref}[1]{Figure~\ref{#1}\xspace}

\ifextendedtr
\else
\pagenumbering{gobble}
\fi


\title{Shared Arrangements: practical inter-query sharing\\for streaming dataflows}

\begin{document}

\ifextendedtr
\date{}
        \author{
        Frank McSherry\textsuperscript{$\ast$}\qquad Andrea Lattuada\qquad Malte Schwarzkopf\textsuperscript{\ddag}\qquad Timothy Roscoe\\
        \textsuperscript{$\ast$}Materialize, Inc.\qquad Dept.\ of Computer Science, ETH Zürich\qquad \textsuperscript{\ddag}Brown University
        }
\else
\numberofauthors{1}
        \author{
        \alignauthor
        Frank McSherry\textsuperscript{$\ast$}\qquad Andrea Lattuada\qquad Malte Schwarzkopf\textsuperscript{\ddag}\qquad Timothy Roscoe\\
               \affaddr{\textsuperscript{$\ast$}Materialize, Inc.\qquad Dept.\ of Computer Science, ETH Zürich\qquad \textsuperscript{\ddag}Brown University}\\
               \email{mcsherry@materialize.io, \{andreal,troscoe\}@inf.ethz.ch, malte@cs.brown.edu}
        }
\fi

\maketitle

\section*{Abstract}

Current systems for data-parallel, incremental processing and
view maintenance over high-rate streams isolate the execution
of independent queries.
This creates unwanted redundancy and overhead in the presence of
concurrent incrementally maintained queries:
each query must independently maintain the same
indexed state over the same input streams, and new queries must build
this state from scratch before they can begin to emit their first
results.

This paper introduces \textit{shared arrangements}: indexed views of
maintained state that allow concurrent queries to reuse the same
in-memory state without compromising data-parallel performance and
scaling.
We implement shared arrangements in a modern stream processor and show
order-of-magnitude improvements in query response time and resource
consumption for interactive queries against high-throughput streams, while
also significantly improving performance in other domains including
business analytics, graph processing, and program analysis.


\section{Introduction}
\seclabel{intro}

In this paper, we present \emph{shared arrangements}, a new technique
for efficiently sharing indexed, consistent state and computation
between the operators of multiple concurrent, data-parallel streaming
dataflows.
We have implemented shared arrangements in \sys, the current
implementation of Differential Dataflow~\cite{naiad,
differential-dataflow, differential}, but
they are broadly applicable to other streaming systems.

Shared arrangements are particularly effective in interactive data
analytics against continually-updating data.
Consider a setting in which multiple analysts, as well as software like
business intelligence dashboards and monitoring systems, interactively
submit standing queries to a stream processing system.
The queries remain active until they are removed.
Ideally, queries would install quickly, provide initial results
promptly, and continue to deliver updates with low latency as the
underlying data change.

Data-parallel stream processors like Flink~\cite{flink}, Spark Streaming~\cite{spark-streaming}, and Naiad~\cite{naiad} excel at incrementally maintaining the results of such queries, but each maintain queries in independent dataflows with independent computation and operator state.
Although these systems support the sharing of common sub-queries, as streams of data, none share the \emph{indexed} representations of relations among unrelated subqueries.
However, there are tremendous opportunties for sharing of state, even when the dataflow operators are not the same.
%
%
For example, we might expect joins of a relation $R$ to use its primary key; even if several distict queries join $R$ against as many other distinct relations, a shared index on $R$ would benefit each query.
Existing systems create independent dataflows for distinct queries, or are restricted to redundant, per-query indexed representations of $R$, wasting memory and computation.


\begin{figure*}[t]
\begin{subfigure}[t]{0.329\textwidth}
\includegraphics[trim={0 0 0 0}, clip, width=\textwidth]{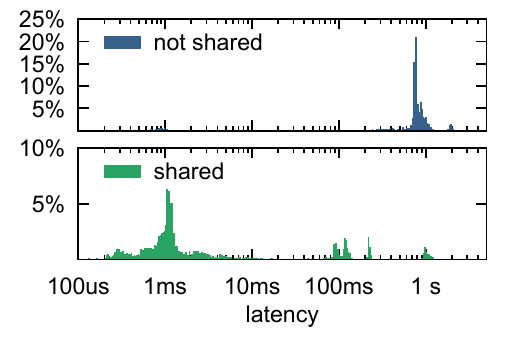}
\caption{Query installation latency.}
\label{fig:eval:tpch-hist-install}
\end{subfigure}
~
\begin{subfigure}[t]{0.329\textwidth}
\includegraphics[trim={0 0 0 0}, clip, width=\textwidth]{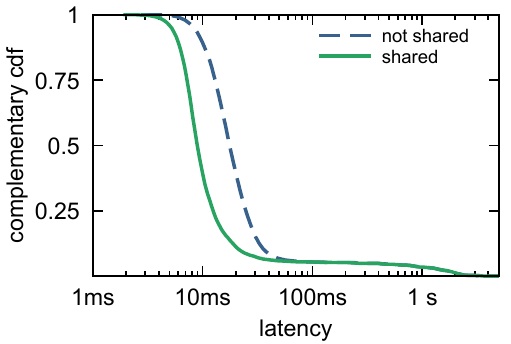}
\caption{Update processing latency.}
\label{fig:eval:tpch-hist-work}
\end{subfigure}
~
\begin{subfigure}[t]{0.329\textwidth}
\includegraphics[trim={0 0 0 0}, clip, width=\textwidth]{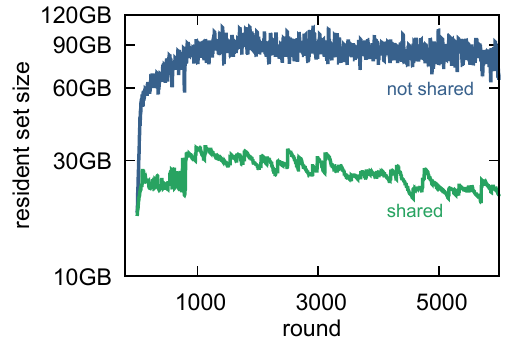}
\caption{Memory footprint (RSS).}
\label{fig:eval:tpch-rss}
\end{subfigure}
\caption{Shared arrangements reduce (\subref{fig:eval:tpch-hist-install}) query installation latency distribution, (\subref{fig:eval:tpch-hist-work}) update processing latency distribution, and (\subref{fig:eval:tpch-rss}) the memory footprint of concurrent TPC-H queries that randomly arrive and retire. The setup uses 32 workers, runs at TPC-H scale factor 10, and loads rows from relations round-robin. Note the $\log_{10}$-scale $x$-axes in (\subref{fig:eval:tpch-hist-install}) and (\subref{fig:eval:tpch-hist-work}), and the $\log_{10}$-scale $y$-axis in (\subref{fig:eval:tpch-rss}).}
\label{fig:eval:tpch_maintain}
\end{figure*}

By contrast, classic relational databases have long shared indexes over their tables across unrelated queries.
The use of shared indexes reduces query times tremendously, especially for point look-ups, and generally improves the efficiency of queries that access relations by the index keys.
While they have many capabilities, relational databases lack streaming dataflow system's support for low-latency, high-throuhput incremental maintenance of materialized query results~\cite{noria, dbtoaster}.
Existing shared index implementations share all reads and writes among multiple workers, and are not immediately appropriate for dataflow workloads where the operator state is sharded across independent workers.
In this work, we seek to transport the shared index idiom from relational databases to streaming dataflows, applying it across changing maintained queries.

Our main observations are that \rom{1} many dataflow operators write the same internal state, representing the accumulated changes of each of their input streams, \rom{2} these dataflow operators often access this state with independent and fundamentally different patterns, and \rom{3} this state can be efficiently shared with single-writer, multiple-reader data structure. Shared arrangements are our design for single-writer, multiple-reader, shared state in dataflow systems.

To illustrate a natural setting for shared arrangements, we run a mix of interactively issued and incrementally maintained TPC-H~\cite{tpch} queries executed as dataflows against a stream of order fulfillment events (\ie changes to the \code{lineitem} relation).
This is similar to a modern business analytics setting with advertisers, impressions, and advertising channels, and our dynamic query setup mimicks the behavior of human analysts and business analytics dashboards.\footnote{TPC-H is originally a static ``data-warehousing'' benchmark;
our streaming setup follows that used by Nikolic et al.~\cite{hotdog}.}
We measure the query installation latency---\ie the time until a new query returns results---as well as update processing latency and standing memory footprint.
Figure~\ref{fig:eval:tpch_maintain} reports the performance of \sys with shared arrangements (``shared'') and without (``not shared''; representative of other data-parallel stream processors).
The measurements show orders of magnitude improvements in query installation latency (a weakness of existing dataflow systems), and improved update processing latency and memory use.
Shared arrangements achieve these improvements because they remove the need to maintain dataflow-local indexes for each query.
As a concrete example throughout this paper, we consider TPC-H queries 3 and 5.
%
%
Both queries join \code{lineitem} with the \code{order} and \code{customer} relations by their primary keys. 
While the queries lack overlapping subqueries that classic multi-query optimization (MQO) would detect, they both perform lookups into \code{order} and \code{customer} by their respective primary keys when processing an updated \code{lineitem} record.
%
%
Existing stream processors will create and maintain a per-query index for each relation, as these systems are designed to decouple the execution of dataflow operators.
%
%
Shared arrangements, by contrast, allow Q3 and Q5 to share indexes for these two relations.
This can dramatically reduce the time to install the second query and provide initial results, and also increases overall system capacity, as multiple queries share in-memory indexes over the same relations.
Finally, these benefits come without restricting update throughput or latency, as they do not change the data-parallel execution model of the stream processor.
%




\medskip

%

The key challenge for shared arrangements is to balance the opportunities of sharing against the need for coordination in the execution of the dataflow.
In the scenarios we target, logical operator state is sharded across multiple physical operators; sharing this state between the operators of multiple queries could require global synchronization.
Arrangements solve this challenge by carefully structuring how they share data: they \rom{1} hard-partition shared state between worker threads and move computation (operators) to it, and \rom{2} multiversion shared state within workers to allow operators to interact with it at different times and rates.

Our full results in~\secref{eval} confirm that shared arrangements translate into two benefits: \rom{1} queries deploy and produce correct results immediately without rescanning historical data, and \rom{2} the same capacity (stream volume and concurrent queries) can be achieved with fewer cores and less RAM.
For a streaming variant of TPC-H and a changing graph, shared arrangements also reduce update latency by {1.3--3$\times$} and reduce the memory footprint of the computation by {2--4$\times$}, compared to systems that do not share indexed state.
These benefits hold without degrading performance on other tasks---batch and interactive graph processing, and Datalog-based program analysis---on which \sys outperforms other systems.

Shared arrangements can be applied to many modern stream processors,
but we implemented them as part of \sys.
%
\sys has been the publicly available reference implementation of Differential Dataflow for several years~\cite{differential}, and is deployed in variety of industrial settings.
For example, VMware Research uses \sys to back their reactive DDlog Datalog engine~\cite{ddlog}, applied to problems in network reconfiguration and program analysis.
Shared arrangements have proved key to the system's success.
%




%
Some benefits of shared arrangements are attainable in purely windowed streaming settings, which ensure that only bounded historical state must be reviewed for new queries.
However, shared arrangements provide similar benefits without these restrictions, and support windowing of data as one of several join idioms.
The main limitation of shared arrangements is that their benefits apply only in the cases where actual sharing occurs; while sharing appears common in settings with relational data and queries, bespoke stream processing computations (\eg with complex and disjoint windowing on relations) may benefit to varying and lesser degrees.

In many ways, shared arrangements are the natural interpretation of an RDBMS index for data-parallel dataflow, and bring its benefits to a domain that has until now lacked them.




\section{Background and Related Work}
\seclabel{background}



\begin{figure*}
  \centering
  \begin{tabular}{r|r|r|r|r}
    \bf System class           & \bf Example      & \bf Sharing               & \bf Updates            & \bf Coordination  \\ \hline
    RDBMS                      & Postgres         & \textbf{Indexed state}    & \textbf{Record-level}  & Fine-grained              \\ \hline
    Batch processor            & Spark            & Non-indexed collections   & Whole collection       & \textbf{Coarse-grained} \\ \hline
    Stream processor           & Flink            & None                      & \textbf{Record-level}  & \textbf{Coarse-grained} \\ \hline
    Shared arrangements        & \sys             & \textbf{Indexed state}    & \textbf{Record-level}  & \textbf{Coarse-grained} \\
  \end{tabular}
  \caption{Sharing of indexed in-memory state, record-level update granularity,
           and scalability through coarse-grained coordination are not all found in current systems. Shared arrangements combine these features in a single system.}
  \label{t:bg-related}
\end{figure*}


Shared arrangements allow queries to share indexed state.
Inter-query state sharing can be framed in terms of \rom{1} \emph{what} can be shared between queries, \rom{2} if this shared state can be \emph{updated}, and \rom{3} the \emph{coordination} required to maintain it.
Figure \ref{t:bg-related} compares sharing in different classes of systems.
%

%
\textbf{Relational databases} like PostgreSQL~\cite{postgresql} excel at
answering queries over schema-defined tables.
Indexes help them speed up access to records in these tables, turning
sequential scans into point lookups.
When the underlying records change, the database updates the index.
This model is flexible and shares indexes between different queries, but it
requires coordination (\eg locking~\cite{sql-server-matview-locks}).
Scaling this coordination out to many parallel processors or servers holding
shards of a large database has proven difficult, and scalable systems consequently
restrict coordination.
Parallel-processing \textbf{``big data'' systems} like
MapReduce~\cite{mapreduce}, Dryad~\cite{dryad}, and Spark~\cite{spark}
rely only on coarse-grained coordination.
They avoid indexes and turn query processing into parallel
scans of distributed collections.
But these collections are immutable: any change to a distributed
collection (e.g. a Spark RDD) requires reconstituting that
collection as a new one.
This captures a collection's lineage and makes all parallelism
deterministic, which eases recovery from failures.
Immutability allows different queries to share the (static)
collection for reading~\cite{nectar}.
This design aids scale-out, but makes these systems a poor fit for
streaming computations, with frequent fine-grained changes to the collections.
\textbf{Stream-processing systems} reintroduce fine-grained mutability, but they lack sharing.
Systems like Flink~\cite{flink}, Naiad~\cite{naiad}, and Noria~\cite{noria} keep long-lived, indexed intermediate results in memory for efficient incremental processing, partitioning the computation across workers for scale-out, data-parallel processing.
However, stream processors associate each piece of state \emph{exclusively} with a single operator, since concurrent accesses to this state from multiple operators would race with state mutations.
Consequently, these systems \emph{duplicate} the state that operators could, in principle, share.
%

By contrast, \textbf{shared arrangements} allow for fine-grained updates to shared indexes and preserve the scalability of data-parallel computation.
In particular, shared arrangements rely on multiversioned indices and data-parallel sharding to allow updates to shared state without the costly coordination mechanisms of classic databases.
In exchange for scalability and parallelism, shared arrangements give up some abilities.
Unlike indexes in relational databases, shared arrangements do not support multiple writers, and are not suitable tools to implement a general transaction processor.
%
Because sharing entangles queries that would otherwise execute in isolation, it can
reduce performance and fault isolation between queries compared to redundant,
duplicated state.
It is important to contrast shared arrangements to Multi-Query Optimization (MQO)
mechanisms that identify overlapping subqueries.
MQO shares state and processing between queries with common subexpressions, but
shared arrangements also benefit distinct queries that access the same indexes.
Both relational and big data systems can identify common sub-expressions via MQO
and either cache their results or fuse their computation.
For example, CJoin~\cite{cjoin} and SharedDB~\cite{shareddb} share
table scans between concurrent ad-hoc queries over large, unindexed tables in
data warehousing contexts, and Nectar~\cite{nectar} does so for
DryadLINQ~\cite{dryadlinq} computations.
More recently AStream~\cite{DBLP:conf/sigmod/KarimovRM19} applied the architecture of SharedDB to windowed streaming computation, and can share among queries the resources applied to future windows.
TelegraphCQ~\cite{telegraphcq} and DBToaster~\cite{dbtoaster} share state among
continuous queries, but sequentially process each query without parallelism
or shared indexes.
Noria~\cite{noria} shares computation between queries over streams, but again lacks shared indexes.
In all these systems, potential sharing must be identified at query deployment time; none provide new queries with access to indexed historical state.
In constrast, shared arrangements (like database indices) allow for post-hoc sharing: new queries can immediately attach to the in-memory arrangements of existing queries, and quickly start producing correct outputs that reflect all prior events.
%

%
%

%
Philosophically closest to shared arrangements is STREAM~\cite{stanford-stream}, a relational stream processor which maintains ``synopses'' (often indexes) for operators and shares them between operators.
In contrast to shared arrangements, STREAM synopses lack features necessary for coarse-grained data-parallel incremental view maintenance: STREAM synopses are not multiversioned and do not support sharding for data-parallelism.
STREAM processes records one-at-a-time; shared arrangements expose a stream of shared, indexed batches to optimized implementations of the operators.

Shared arrangements allow for operators fundamentally designed around shared indexes.
Their ideas are, in principle, compatible with many existing stream processors
that provide versioned updates (as \eg Naiad and Flink do) and support physical co-location of operator shards (as \eg Naiad and Noria do).
%


\section{Context and Overview}
\seclabel{overview}

Shared arrangements are designed in the context of streaming dataflow systems which provide certain core functionality.
We enumerate the requirements in~\subsecref{overview}{timeaware}, and describe how several popular systems meet those requirements in~\subsecref{overview}{timeaware-systems}.
Our implementation builds on Timely Dataflow~\cite{timely}, which offers performant implementations of key abstractions required by shared arrangements~\subsecref{overview}{efficiency}.
With this context, \subsecref{overview}{example} shows how shared arrangements support deployment and continual maintenance of multiple queries against evolving data with the example of TPC-H Q3 and Q5.

\subsection{Time-aware Dataflow}
\subseclabel{overview}{timeaware}

We designed shared arrangements for use in streaming dataflow
computations that implement incrementally maintained queries on high-rate streams.
Data-parallel stream processing systems express such computations as a
dataflow graph whose vertices are \emph{operators}, and whose roots
constitute \emph{inputs} to the dataflow.
An \emph{update} (\eg an event in stream) arrives at an input and flows
along the graph's edges into operators.
Each operator takes the incoming update, processes it, and emits any
resulting derived updates.

\paragraph{Operator State}
In processing the update, a dataflow operator may refer to its \emph{state}: long-lived information that the operator maintains across invocations.
State allows for efficient incremental processing, such as keeping a running counter.
For many operators, the state is indexed by a \emph{key} contained in the input update.
For example, a \texttt{count} operator over tweets grouped by the user who posted them will access its state by user ID.
It is these indexes that shared arrangements seek to share between multiple operators.

\paragraph{Data Parallelism}
Dataflow systems achieve parallel processing by \emph{sharding} operators whose state is indexed by key.
The system partitions the key space, and creates operators to independently process each partition.
In the tweet counting example, the system may partition updates by the user ID, and send each update to an appropriate operator shard, which maintains an index for its subset of user IDs.
Each operator shard maintains its own private index; these index shards, taken collectively, represent the same index a single operator instance would maintain.

\paragraph{Logical Timestamps}
Updates flow through a dataflow graph asynchronously.
Concurrent updates may race along the multiple paths (and even cycles) between dataflow operators potentially distributed across multiple threads of control, and arrive in different orders than they were produced.
For operators to compute correct results in the face of this asynchrony, some coordination mechanism is required.
Many systems assign a logical timestamp to messages, either explicitly or implicitly through their scheduling mechanisms.
At the same time, systems need to inform operators in the dataflow graphs when each logical time has ``passed'', in the sense that it will not again appear on messages input to the operator.
With logical timestamps on messages and timestamp progress statements from the system, operators can maintain clear semantics even with asynchronous, non-deterministic execution.

We use the terminology of Timely Dataflow to describe progress statements and their consequences.
Timely Dataflow reports timestamp progress information to each operator input by a \emph{frontier}: a set of logical timestamps.
We say a time is \emph{beyond} a frontier when it is greater than or equal to some element of the frontier.
A system should guarantee that all future timestamps received at an operator input are beyond the frontier most recently reported by the system, and that these reports should only advance (\ie elements of a frontier should each be beyond the prior frontier).

\subsection{Time-aware Dataflow Systems}
\subseclabel{overview}{timeaware-systems}

Several dataflow systems are time-aware, either implicitly or explicitly. We now give examples to relate the concepts for readers familiar with these systems. Shared arrangements can be implemented in each of these systems, but our implementation will benefit from specific system details, which we explain in ~\subsecref{overview}{efficiency}.

\textbf{Spark Streaming}~\cite{spark-streaming} partitions logical time into small batches, and for each batch evaluates an entire dataflow. Spark Streaming therefore implicitly provides logical timestamps, with progress indicated by the scheduling of an operator. Spark Streaming operators do not have long-lived state, but each invocation can read an input corresponding to its prior state and write an output for its updated state, at greater expense than updating in-memory state.

\textbf{Flink}~\cite{flink} is a streaming dataflow system that timestamps each message, and flows control messages, called \textit{low watermarks}, in-band with data messages. A "watermark" for a timestamp $ t $ indicates that all messages that follow have timestamps greater or equal to $ t $. Flink operators can have long-lived state, and can themselves be the result of sharding a larger dataflow operator.



\textbf{Timely Dataflow} is a model for data-parallel dataflow execution, introduced by Naiad~\cite{naiad}.
Each Timely Dataflow operator is sharded across all workers, with data exchanged between workers for dataflow edges where the destination operator requires it.
In Timely Dataflow, all data carries a logical timestamp, and timestamp progress statements are exchanged out-of-band by workers.
Workers then independently determine frontiers for each of their hosted operators.


%
%

%

\subsection{Shared Arrangements Overview}
\subseclabel{overview}{shared-arrangements-overview}



%
The high-level objective of shared arrangements is to share indexed operator
state, both within a single dataflow and across multiple concurrent dataflows,
serving concurrent continuous queries.
Shared arrangements substitute for per-instance operator state in the dataflow, and  should appear to an individual operator as if it was a private copy of its state.
Across operators, the shared arrangement's semantics are identical to maintaining individual copies of the indexed state in each operator.
At the same time, the shared arrangement permits index reuse between operators
that proceed at a different pace due to asynchrony in the system.
%


%
Operators that provide incremental view maintenance, so that their output
continually reflects their accumulated input updates, offer particularly good
opportunities for sharing state.
This is because each stream of updates has one logical interpretation:
as an accumulation of all updates.
When multiple such operators want to build the same state, but vary what subset to read based on the time $t$ they are currently processing, they can share arrangements instead.
We assume that developers specify their dataflows using existing interfaces,
but that they (or an optimizing compiler) explicitly indicate which dataflow
state to share among which operators.
A shared arrangement exposes different \emph{versions} of the underlying state to
different operators, depending on their current time $t$.
The arrangement therefore emulates, atop physically shared state, the
separate indexes that operators would otherwise keep.
Specifically, shared arrangements maintain state for operators whose state
consists of the input collection (\ie the cumulative streaming input).
Following Differential Dataflow~\cite{differential-dataflow} terminology, a
\emph{collection trace} is the set of update triples \textit{(data, time, diff)}
that define a collection at time $t$ by the accumulation of those inputs
\textit{(data, diff)} for which $\textit{time} \le t$
(Figure~\ref{fig:collection-trace}).
Each downstream operator selects a different view based on a different time
$t$ of accumulation.
Formal semantics of differential dataflow operators are presented in \cite{differential-foundations}.

\begin{figure}

{\renewcommand{\arraystretch}{.8}\setlength{\tabcolsep}{1.5pt}
\begin{tabular}{lll}
Collection trace \\ \hline
{\scriptsize \texttt{(data=(id=342, "Company LLC", "USA"),       }}  & {\scriptsize \texttt{time=4350,}} & {\scriptsize \texttt{diff=+1)}} \\
{\scriptsize \texttt{(data=(id=563, "Firma GmbH", "Deutschland"),}}  & {\scriptsize \texttt{time=4355,}} & {\scriptsize \texttt{diff=+1)}} \\
{\scriptsize \texttt{(data=(id=225, "Azienda SRL", "Italia"),}}      & {\scriptsize \texttt{time=4360,}} & {\scriptsize \texttt{diff=+1)}} \\
{\scriptsize \texttt{(data=(id=225, "Azienda SRL", "Italia"),}}      & {\scriptsize \texttt{time=6200,}} & {\scriptsize \texttt{diff=-1)}} \\
{\scriptsize \texttt{(data=(id=225, "Company Ltd", "UK"),    }}      & {\scriptsize \texttt{time=6220,}} & {\scriptsize \texttt{diff=+1)}} \\ \hline
\end{tabular}
}

\vspace{.2em}

{\renewcommand{\arraystretch}{.8}\setlength{\tabcolsep}{1.5pt}
\begin{tabular}{ll}
Collection at time $t = 4360$ \\ \hline
{\scriptsize \texttt{(data=(id=342, "Company LLC", "USA"),       }} & {\scriptsize \texttt{diff=+1)}} \\
{\scriptsize \texttt{(data=(id=563, "Firma GmbH", "Deutschland"),}} & {\scriptsize \texttt{diff=+1)}} \\
{\scriptsize \texttt{(data=(id=225, "Azienda SRL", "Italia"),    }} & {\scriptsize \texttt{diff=+1)}} \\ \hline
\end{tabular}
}

\vspace{.2em}

{\renewcommand{\arraystretch}{.8}\setlength{\tabcolsep}{1.5pt}
\begin{tabular}{ll}
Collection at time $t = 6230$ \\ \hline
{\scriptsize \texttt{(data=(id=342, "Company LLC", "USA"),  }} & {\scriptsize \texttt{diff=+1)}} \\
{\scriptsize \texttt{(data=(id=563, "Firma", "Deutschland"),}} & {\scriptsize \texttt{diff=+1)}} \\
{\scriptsize \texttt{(data=(id=225, "Company Ltd", "UK"),   }} & {\scriptsize \texttt{diff=+1)}} \\ \hline
\end{tabular}
}
\caption{Update triples incoming to an operator, a ``collection trace'', and the resulting collection view at different times. \label{fig:collection-trace}}
\end{figure}

An explicit, new \code{arrange} operator maintains the multiversioned state and views, while
downstream operators read from their respective views.
The contents of these views vary according to current logical timestamp frontier
at the different operators: for example, a downstream operator's view may not
yet contain updates that the upstream \code{arrange} operator has already added
into the index for a future logical time if the operator has yet to process
them.
%


%
Downstream operators in the same dataflow, and operators in other dataflows
operating in the same logical time domain, can share the arrangement as long as
they use the same key as the arrangement index.
In particular, sharing can extend as far as the next change of key (an
\code{exchange} operator in Differential Dataflow, or a ``shuffle'' in Flink),
an arrangement-unaware operator (\eg \code{map}, which may change the key),
or an operator that explicitly materializes a new collection.

\subsection{Shared Arrangements Example}
\subseclabel{overview}{example}

We illustrate a concrete use of shared arrangements with the example of TPC-H Q3 and Q5.
Recall that in our target setting, analysts author and execute SQL queries against heavily normalized datasets.
%
%
Relations in analytics queries are commonly normalized into ``fact'' and ``dimension'' tables, the former containing foreign keys into the latter.
While new facts (\eg ad impressions, or line items in TPC-H) are continually added, the dimension tables are also updated (for example, when a customer or supplier updates their information).
The dimension tables are excellent candidates for arrangement by primary keys: we expect many uses of these tables to be joins by primary keys, and each time this happens an arrangement can be shared rather than reconstructed.

TPC-H Q3 retrieves the ten unshipped orders with the highest value. 
This is a natural query to maintain, as analysts work to unblock a potential backlog of valuable orders.
The query derives from three relations---\texttt{lineitem}, \texttt{orders}, and \texttt{customer}---joined using the primary keys on \texttt{orders} and \texttt{customer}.
A dataflow would start from \texttt{lineitem} and join against \texttt{orders} and \texttt{customer} in sequence. 
TPC-H Q5 lists the revenue volume done through local suppliers, and derives from three more relations (\texttt{supplier}, \texttt{nation}, and \texttt{region}).
Each relation other than \texttt{lineitem} is joined using its primary key.
A dataflow might start from \texttt{lineitem} and join against dimension tables in a sequence that makes a foreign key available for each table before joining it.
In both queries, each dimension table is sharded across workers by their primary key.

The two queries do not have overlapping subqueries---each has different filters on order dates, for example---but both join against \texttt{orders} and \texttt{customer} by their primary keys.
Deployed on the same workers, we first apply \texttt{arrange} operators to the \texttt{orders} and \texttt{customer} relations by their primary keys, shuffling updates to these relations by their key and resulting in shareable arrangements.
In separate dataflows, Q3 and Q5 both have \texttt{join} operators that take as input the corresponding arrangement, rather than the streams of updates that formed them.
As each arrangement is pre-sharded by its key, each worker has only to connect its part of each arrangement to its dataflow operators.
Each worker must still stream in the \texttt{lineitem} data but the time for the query to return results becomes independent of the sizes of \texttt{orders} and \texttt{customer}.

%
%

\subsection{System Features Supporting Efficiency}
\subseclabel{overview}{efficiency}

Shared arrangements apply in the general dataflow setting described in~\subsecref{overview}{timeaware}, and can benefit any system with those properties.
But additional system properties can make an implementation more performant.
We base our implementation on frameworks (Timely and Differential Dataflow) with these properties.

\smallparagraph{Timestamp batches}
Timestamps in Timely Dataflow only need to be \emph{partially ordered}.
The partial order of these timestamps allows Timely Dataflow graphs to avoid unintentional concurrency blockers, like serializing the execution of rounds of input (Spark) or rounds of iteration (Flink).
By removing these logical concurrency blockers, the system can retire larger groups of logical times at once, and produce larger batches of updates.
This benefits \sys because the atoms of shared state can increase in granularity, and the coordination between the sharing sites can decrease substantially.
Systems that must retire smaller batches of timestamps must coordinate more frequently, which can limit their update rates.

\smallparagraph{Multiversioned state}
Differential Dataflow has native support for \emph{multiversioned} state.
This allows it to work concurrently on any updates that are not yet beyond the Timely Dataflow frontier, without imposing a serial execution order on updates.
Multiversioned state benefits shared arrangements because it decouples the execution of the operators that share the state.
Without multiversioned state, operators that share state must have their executions interleaved for each logical time, which increases coordination.

\smallparagraph{Co-scheduling}
Timely Dataflow allows each worker to host an unbounded number of dataflow operators, which the worker then schedules.
This increases the complexity of each worker compared to a system with one thread per dataflow operator, but it increases the efficiency in complex dataflows with more operators than system threads.
Co-scheduling benefits shared arrangements because the state shared between operators can be partitioned between worker threads, who do not need mutexes or locks to manage concurrency.
Systems that cannot co-schedule operators that share state must use inter-thread or inter-process mechanisms to access shared state, increasing complexity and the cost.

\smallparagraph{Incremental Updates}
Differential dataflow operators are designed to provide incremental view maintenance: their output updates continually reflects their accumulated input updates.
This restriction from general-purpose stream processing makes it easier to compose dataflows based on operators with clear sharing semantics.
%
%
Systems that provide more general interfaces, including Timely Dataflow, push a substantial burden on to the user to identify operators that can share semantically equivalent state.

\section{Implementation}
\seclabel{arrange}

Our implementation of a shared arrangement consists of three inter-related components:

\begin{enumerate}[nosep]
  \item the \emph{trace}, a list of immutable, indexed batches of updates that together make up the multiversioned index;
  \item an \code{arrange} operator, which mints new batches of updates, and writes them to and maintains the trace; and
  \item read \emph{handles}, through which arrangement-aware operators access the trace.
\end{enumerate}

Each shared arrangement has its updates partitioned by the key of its index, across the participating dataflow workers.
This same partitioning applies to the trace, the \code{arrange} operator, and the read handles, each of whose interactions are purely \emph{intra}-worker; each worker maintains and shares its \emph{shard} of the whole arrangement.
The only inter-worker interaction is the pre-shuffling of inbound updates which effects the partition.

%
%

Figure~\ref{fig:arrange} depicts a dataflow which uses an arrangement for the \texttt{count} operator, which must take a stream of \textit{(data, time, diff)} updates and report the changes to accumulated counts for each \textit{data}. This operation can be implemented by first partitioning the stream among workers by \textit{data}, after which each worker maintains an index from \textit{data} to its history, a list of \textit{(time, diff)}. This same indexed representation is what is needed by the \texttt{distinct} operator, in a second dataflow, which can re-use the same partitioned and indexed arrangement rather than re-construct the arrangement itself.


\begin{figure}
\centering
\includegraphics[width=0.45\textwidth]{./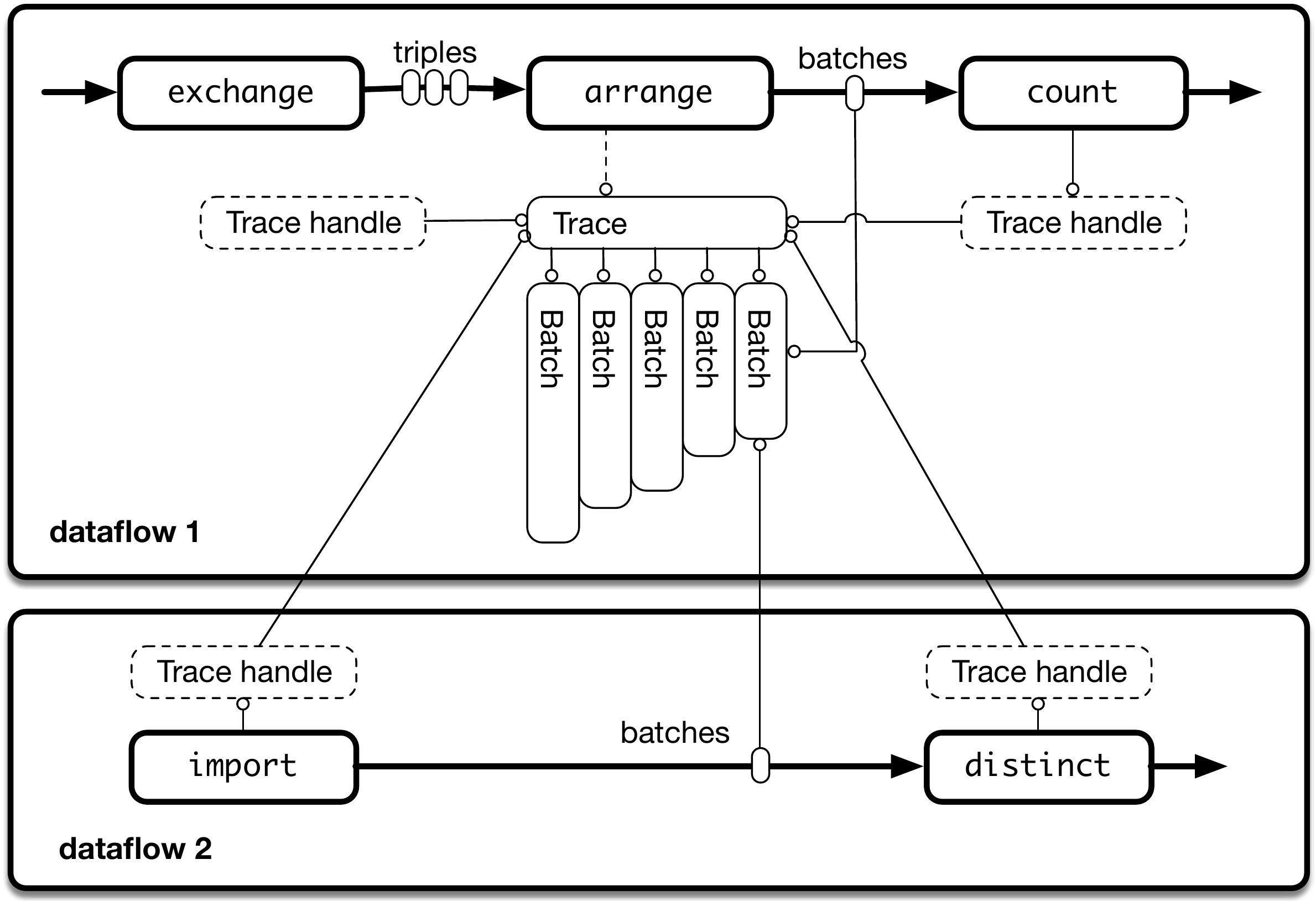}
\caption{A worker-local overview of arrangement. Here the arrangement is constructed for the \texttt{count} operator, but is shared with a \texttt{distinct} operator in another dataflow. Each other worker performs the same collective data exchange, followed by local batch creation, trace maintenance, and sharing.
\label{fig:arrange}}
\end{figure}

\subsection{Collection traces}

As in Differential Dataflow, a \emph{collection trace} is the set of update triples \textit{(data, time, diff)} that define a collection at any time $t$ by the accumulation of those \textit{(data, diff)} for which $\textit{time} \le t$. A collection trace is initially empty and is only revealed as a computation proceeds, determined either as an input to the dataflow or from the output of another dataflow operator. Although update triples may arrive continually, it is only when the Timely Dataflow input frontier advances that the \arrange operator learns that the updates for a subset of times are now complete.

In our design a collection trace is logically equivalent to an append-only list of immutable batches of update triples. Each batch is described by two frontiers of times, \emph{lower} and \emph{upper}, and the batch contains exactly those updates whose times are beyond the lower frontier and not beyond of the upper frontier. The upper frontier of each batch matches the lower frontier of the next batch, and the growing list of batches reports the developing history of confirmed updates triples. A batch may be empty, which indicates that no updates exist in the indicated range of times.

To support efficient navigation of the collection trace, each batch is
indexed by its \textit{data} to provide random access to the history
of each \textit{data} (the set of its $(\textit{time}, \textit{diff})$
pairs).
Background merge computation (performed by the \code{arrange} operator) ensures that at any time, a trace consists of logarithmically many bat\-ches, which ensures that operators can efficiently navigate the union of all batches.

Each reader of a trace holds a \emph{trace handle}, which acts as a cursor that can navigate the multiversioned index. Each handle has an associated frontier, and ensures that it provides correct views of the index for any times beyond this frontier.
Trace readers advance the frontier of their trace handle when they no longer require certain historical distinctions, which allows the \arrange operator to compact batches by coalescing updates at older times, and to maintain a bounded memory footprint as a collection evolves.



\subsection{The \texttt{arrange} operator}
\subseclabel{arrangements}{arrange}

The \arrange operator receives update triples, and must both create new immutable indexed batches of updates as its input frontier advances and compactly maintain the collection trace without violating its obligations to readers of the trace.

At a high level, the \arrange operator buffers incoming update triples until the input frontier advances, at which point it extracts and indexes all buffered updates not beyond the newly advanced input frontier. A shared reference to this new immutable batch is both added to the trace and emitted as output from the \arrange operator. When adding the batch to the trace, the operator may need to perform some maintenance to keep the trace representation compact and easy to navigate.



\paragraph{Batch implementation}
Each batch is immutable, but indexed to provide efficient random access.
Our default implementation sorts update triples \textit{(data, time, diff)} first by \textit{data} and then by \textit{time}, and stores the fields each in its own column.
This balances the performance of read latency, read throughput, and merge throughput.
We have other batch implementations for specific domains (\eg graphs), and new user implementations can be added without changing the surrounding superstructure.
Most OLTP index structures are more general than needed for our immutable batches, but many of their data layout ideas could still be imported.

\paragraph{Amortized trace maintenance} The maintenance work of merging batches in a trace is amortized over the introduced batches, so that no batch causes a spike in computation (and a resulting spike in latency). Informally, the operator performs the same \emph{set} of merges as would a merge sort applied to the full sequence of batches, but only as the batches become available. Additionally, each merge is processed in steps: for each new batch, we perform work proportional to the batch size on each incomplete merge. A higher constant of proportionality leads to more eager merging, improving the throughput of the computation, whereas a lower constant improves the maximum latency of the computation.

\paragraph{Consolidation} \subseclabel{arrange}{consolidation} As readers of the trace advance through time, historical times become indistinguishable and updates at such times to the same \textit{data} can be coalesced. The logic to determine which times are indistinguishable is present in Naiad's prototype implementation~\cite{naiad}, but the mathematics of compaction have not been reported previously.
\extendedtr{%
In \ref{appendix:theorems}, we present proofs of optimality and correctness.
}{%
Our \theTR contains proofs of optimality and correctness.
}

\paragraph{Shared references} Both immutable batches and traces themselves are reference counted. Importantly, the \arrange operator holds only a ``weak'' reference to its trace, and if all readers of the trace drop their handles the operator will continue to produce batches but cease updating the trace. This optimization is crucial for competitive performance in computations that use both dynamic and static collections.

\subsection{Trace handles}

Read access to a collection trace is provided through a \emph{trace
  handle}. A trace handle provides the ability to \texttt{import} a
collection into a new dataflow, and to manually navigate a collection,
but both only ``as of'' a restricted set of times. Each trace handle
maintains a frontier, and guarantees only that accumulated collections
will be correct when accumulated to a time beyond this
frontier. The trace itself tracks outstanding trace handle frontiers,
which indirectly inform it about times that are indistinguishable to
all readers (and which can therefore be coalesced).

Many operators (including \texttt{join} and \texttt{group}) only need access to their accumulated input collections for times beyond their input frontiers. As these frontiers advance, the operators are able to advance the frontier on their trace handles and still function correctly. The \texttt{join} operator is even able to drop the trace handle for an input when its \emph{other} input ceases changing. These actions, advancing the frontier and dropping trace handles, provide the \arrange operator with the opportunity to consolidate the representation of its trace, and in extreme cases discard it entirely.

A trace handle has an \texttt{import} method that, in a new dataflow, creates an
arrangement exactly mirroring that of the trace.
The imported collection immediately produces any existing consolidated historical
batches, and begins to produce newly minted batches.
The historical batches reflect all updates applied to the collection, either with
full historical detail or coalesced to a more recent timestamp, depending on whether
the handle's frontier has been advanced before importing the trace.
Computations require no special logic or modes to accommodate attaching to
in-progress streams; imported traces appear indistinguishable to their
original streams, other than their unusually large batch sizes and recent
timestamps.


\section{Arrangement-aware operators}
\seclabel{operators}

Operators act on collections, which can be represented either as a stream of update triples or as an arrangement. These two representations lead to different operator implementations, where the arrangement-based implementations can be substantially more efficient than traditional record-at-a-time operator implementations. In this section we explain arrangement-aware operator designs, starting with the simplest examples and proceeding to the more complex \texttt{join}, \texttt{group}, and \texttt{iterate} operators.

\subsection{Key-preserving stateless operators}
\subseclabel{operators}{stateless}

Several stateless operators are ``key-preserving'': they do not transform their input data to the point that it needs to be re-arranged. Example operators are \texttt{filter}, \texttt{concat}, \texttt{negate}, and the iteration helper methods \texttt{enter} and \texttt{leave}. These operators are implemented as streaming operators for streams of update triples, and as wrappers around arrangements that produce new arrangements. For example, the \texttt{filter} operator results in an arrangement that applies a supplied predicate as part of navigating through a wrapped inner arrangement.
This design implies a trade-off, as an aggressive filter may
reduce the data volume to the point that it is cheap to
maintain a separate index, and relatively ineffective to search in a
large index only to discard the majority of results.
The user controls which implementation to use: they can filter an
arrangement, or reduce the arrangement to a stream of updates and then filter it.


\subsection{Key-altering stateless operators}
\subseclabel{operators}{key-altering}

Some stateless operators are ``key-altering'', in that the indexed representation of their output has little in common with that of their input. One obvious example is the \texttt{map} operator, which may perform arbitrary record-to-record transformations. These operators always produce outputs represented as streams of update triples.

\subsection{Stateful operators}
\subseclabel{operators}{stateful}

Differential Dataflow's stateful operators are data-parallel: their input \textit{data} have a \textit{(key, val)} structure, and the computation acts independently on each group of \textit{key} data. This independence is what allows Naiad and similar systems to distribute operator work across otherwise independent workers, which can then process their work without further coordination. At a finer scale, this independence means that each worker can determine the effects of a sequence of updates on a key-by-key basis, resolving all updates to one key before moving to the next, even if this violates timestamp order.



\subsubsection{The \texttt{join} operator}
\subseclabel{operators}{op-join}

Our \texttt{join} operator takes as inputs batches of updates from each of its arranged inputs. It produces any changes in outputs that result from its advancing inputs, but our implementation has several variations from a traditional streaming hash-join.

\paragraph{Trace capabilities}
The join operator is bi-linear, and needs only each input trace in order to respond to updates from the \emph{other} input. As such, the operator can advance the frontiers of each trace handle by the frontier of the other input, and it can drop each trace handle when the other input closes out. This is helpful if one input is static, as in iterative processing of static graphs.

\paragraph{Alternating seeks}
Join can receive input batches of substantial size, especially when
importing an existing shared arrangement. Naively implemented, we might require time linear in the input batch sizes. Instead, we perform alternating seeks between the cursors for input batches and traces of the other input: when the cursor keys match we perform work, and if the keys do not match we seek forward for the larger key in the cursor with the smaller key. This pattern ensures that we perform work at most linear in the smaller of the two sizes, seeking rather than scanning through the cursor of the larger trace, even when it is supplied as an input batch.

\paragraph{Amortized work}
The \texttt{join} operator may produce a significant amount of output data that can be
reduced only once it crosses an exchange edge for a downstream operator.
If each input batch is immediately processed to completion, workers may be overwhelmed
by the output, either buffered for transmission or (as in our prototype) sent to
destination workers but buffered at each awaiting reduction.
Instead, operators respond to input batches by producing ``futures'', limited
batches of computation that can each be executed until sufficiently many outputs
are produced, and then suspend.
Futures make copies of the shared batch and trace references they use, which avoids
blocking state maintenance for other operators.

\subsubsection{The \texttt{group} operator}
\subseclabel{operators}{op-group}

The \texttt{group} operator takes as input an arranged collection with data of the form \textit{(key, val)} and a reduction function from a key and list of values to a list of output values. At each time the output might change, we reform the input and apply the reduction function, and compare the results to the reformed output to determine if output changes are required.

Perhaps surprisingly, the output may change at times that do not appear in the input (as the least upper bound of two times does not need to be one of the times). Hence, the \texttt{group} operator tracks a list of pairs \textit{(key, time)} of future work that are required even if we see no input updates for the key at that time. For each such \textit{(key, time)} pair, the \texttt{group} operator accumulates the input and output for \textit{key} at \textit{time}, applies the reduction function to the input, and subtracts the accumulated output to produce any corrective output updates.

\paragraph{Output arrangements}
The \texttt{group} operator uses a shared arrangement for its output, to efficiently reconstruct what it has previously produced as output without extensive re-invocation of the supplied user logic (and to avoid potential non-determinism therein). This provides the \texttt{group} operator the opportunity to share its output trace, just as the \texttt{arrange} operator does. It is common, especially in graph processing, for the results of a \texttt{group} to be immediately joined on the same key,
and \texttt{join} can re-use the same indexed representation that \texttt{group} uses internally for its output.


\subsection{Iteration}
\subseclabel{operators}{iteration}

The iteration operator is essentially unchanged from Naiad's Differential Dataflow implementation. We have ensured that arrangements can be brought in to iterative scopes from outer scopes using only an arrangement wrapper, which allows access to shared arrangements in iterative computations.




\section{Evaluation}
\seclabel{eval}

%
%
We evaluate \sys on end-to-end workloads to measure the impact of shared arrangements with regards to query installation latency, throughput, and memory use (\subsecref{eval}{endtoend}).
We then use microbenchmarks with \sys to characterize our design's performance and the arrangement-aware operator implementations (\subsecref{eval}{designeval}).
Finally, we evaluate \sys on pre-existing benchmarks across multiple domains to check if \sys maintains high performance compared to other peer systems with and without using shared arrangements (\subsecref{eval}{broad}).
%



\paragraph{Implementation}
\seclabel{impl}
We implemented shared arrangements as part of \sys, our stream processor.
\sys is our reference Rust implementation of Differential Dataflow~\cite{differential-dataflow}
with shared arrangements. It consists of a total of about 11,700 lines of code, and
builds on an open-source implementation of Timely Dataflow~\cite{timely}.
%

%
%
%
%

%
The \texttt{arrange} operator is defined in terms of a generic trace type, and
our amortized merging trace is defined in terms of a generic batch type.
Rust's static typing ensure that developers cannot incorrectly mix ordinary
update triples and streams of arranged batches.
%
%
%
%

\paragraph{Setup}
We evaluate \sys on a four-socket NUMA system with four Intel Xeon
E5-4650 v2 CPUs, each with 10 physical cores and 512 GB of aggregate system
memory.
We compiled \sys with \texttt{rustc} 1.33.0 and the \texttt{jemalloc}~\cite{jemalloc} allocator.
\sys does distribute across multiple machines and supports sharding shared arrangements across them, but our evaluation here is restricted to multiprocessors.
%
%
When we compare against other systems, we rely on the best, tuned measurements reported by their authors, but compare \sys only if we are executing it on comparable or less powerful hardware than the other systems had access to.
%



\subsection{End-to-end impact of shared arrangements}
\subseclabel{eval}{endtoend}

We start with an evaluation of shared arrangements in \sys, in two domains with interactively issued queries against incrementally updated data sources.
We evaluate the previously described streaming TPC-H setup, which windows the \texttt{lineitem} relation, as well as a recent interactive graph analytics benchmark.
For the relational queries, we would hope to see shared arrangements reduce the installation latency and memory footprint of new queries when compared to an instance of \sys that processes queries independently.
For the graph tasks, we would hope that shared arrangements reduce the update and query latencies at each offered update rate, increase the peak update rate, and reduce the memory footprint when compared to an instance of \sys that processes queries indepedently.
In both cases, if shared arrangements work as designed, they should increase the capacity of \sys on fixed resources, reducing the incremental costs of new queries.

\subsubsection{TPC-H}
\subseclabel{eval}{tpch}
The TPC-H~\cite{tpch} benchmark schema has eight relations, which describe order fulfillment events, as well as the orders, parts, customers, and suppliers they involve, and the nations and regions in which these entities exist.
Of the eight relations, seven have meaningful primary keys, and are immediately suitable for arrangement (by their primary key).
The eighth relation is \texttt{lineitem}, which contains fulfillment events, and we treat this collection as a stream of instantaneous events and do not arrange it.

TPC-H contains 22 ``data warehousing'' queries, meant to be run against large, static datasets.
We consider a modified setup where the eight relations are progressively loaded~\cite{hotdog},
one record at a time, in a round-robin fashion among the relations (at scale factor
10).\footnote{We focus on shared arrangements here, but \sys matches or
outperforms DBToaster~\cite{hotdog} even when queries run in isolation
\extendedtr{%
We show these results in \ref{appendix:relational}.
}{%
\cite{kpg-tr}.
}
}
To benchmark the impact of shared arrangements,
we interactively deploy and retire queries
while we load the eight relations.
Each query has access to the full current contents of the seven keyed relations
that we maintain shared arrangements for.
By contrast, fulfillment events are windowed and each query only observes the
fulfillment events from when it is deployed until when it is retired,
implementing a ``streaming'' rather than a ``historic'' query.
This evaluates the scenario presented in~\secref{intro}, where analysts
interactively submit queries.
We report performance for ten active queries.
The 22 TPC-H queries differ, but broadly either derive from the windowed \code{lineitem}
relation and reflect only current fulfillments, or they do not derive from \code{lineitem}
and reflect the full accumulated volume of other relations.
Without shared arrangements, either type of query requires building new
indexed state for the seven non-\code{lineitem} relations.
With shared indexes, we expect queries of the first type to be quick
to deploy, as their outputs are initially empty.
Queries of the second type should take longer to deploy
in either case, as their initial output depends on many records.

\paragraph{Query latency}
To evaluate query latency, we measure the time from the start of query
deployment until the initial result is ready to be returned.
Query latency is significant because it determines whether the system delivers an interactive experience to human users, but also to dashboards that programmatically issue queries.
Figure \ref{fig:eval:tpch-hist-install} (shown in \S\ref{sec:intro}) reports the distribution
of query installation latencies, with and without shared arrangements.
With shared arrangements, most queries (those that derive from
\code{lineitem}) deploy and begin updating in milliseconds; the five
queries that do not derive from \texttt{lineitem} are not windowed and perform non-trivial
computation to produce their initial correct answer: they take between 100ms
and 1s, depending on the sizes of the relations they use.
Without shared arrangements, almost all queries take 1--2 seconds to
install as they must create a reindexed copy of their inputs.
Q1 and Q6 are exceptions, since they use no relations other than
\texttt{lineitem}, and thus avoid reindexing any inputs; shared arrangements cannot improve the installation latency of these queries.
We conclude that shared arrangements substantially reduce the majority of query installation latencies, often by several orders of magnitude.
The improvement to millisecond latency brings responses within interactive timescales,
which helps improve productivity of human analysts and intervential latency for dependent
software.

\paragraph{Update latency}
Once a query is installed, \sys continually updates its results as new \code{lineitem} records arrive.
To evaluate the update latency achieved, we record the amount of time required to process each round of input data updates after query installation.
%
%

%
Figure \ref{fig:eval:tpch-hist-work} presents the distribution of these times,
with and without shared arrangements, as a complementary cumulative
distribution function (CCDF).
The CCDF visualization---which we will use re\-peat\-ed\-ly---shows the ``fraction of
times with latency greater than'' and highlights the tail latencies towards
the bottom-right side of the plot.
We see a modest but consistent reduction in processing time (about 2$\times$)
when using shared arrangements, which eliminate redundant index maintenance
work.
There is a noticeable tail in both cases, owed to two expensive queries that
involve inequality joins (Q11 and Q22) and which respond slowly to changes
in their inputs independently of shared arrangements.
Shared arrangements yield lower latencies and increase update throughput.

\begin{figure*}[t]
\begin{subfigure}[b]{0.329\textwidth}
\includegraphics[trim={0 0 0 0}, clip, width=\textwidth]{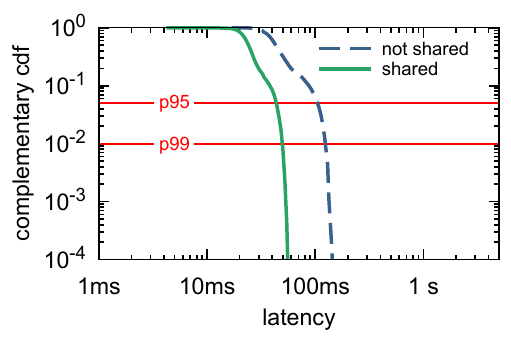}
\caption{Latencies for query mix.}
\label{fig:eval:graph_sharing}
\end{subfigure}
~
\begin{subfigure}[b]{0.329\textwidth}
\includegraphics[trim={0 0 0 0}, clip, width=\textwidth]{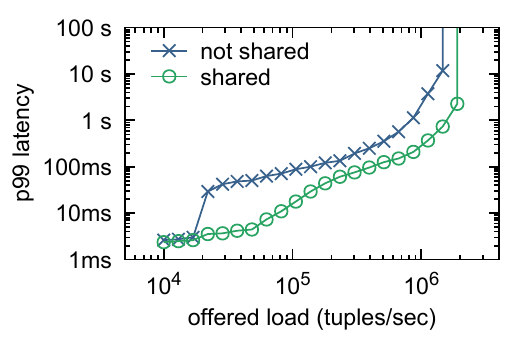}
\caption{99\textsuperscript{th} percentile latency at given offered load.}
\label{fig:eval:graph_boomerang}
\end{subfigure}
~
\begin{subfigure}[b]{0.329\textwidth}
\includegraphics[trim={0 0 0 0}, clip, width=\textwidth]{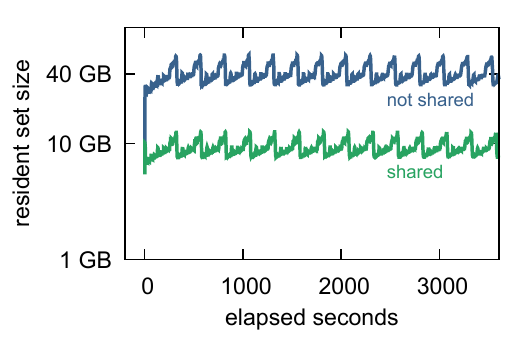}
\caption{Resident set size.}
\label{fig:eval:graph_memory}
\end{subfigure}
  \caption{Shared arrangements reduce query latency, increase the load handled, and reduce the memory footprint of interactive graph queries. The setup uses 32 workers, and issues 100k updates/sec and 100k queries/sec against a 10M node/64M edge graph in (\subref{fig:eval:graph_sharing}) and (\subref{fig:eval:graph_memory}), while (\subref{fig:eval:graph_boomerang}) varies the load. Note the $\log_{10}$--$\log_{10}$ scales in (\subref{fig:eval:graph_sharing}) and (\subref{fig:eval:graph_boomerang}), and the $\log_{10}$-scale $y$-axis in (\subref{fig:eval:graph_memory}).}
\label{fig:eval:graph_maintain}
\end{figure*}

\paragraph{Memory footprint}
Since shared arrangements eliminate duplicate copies of index structures, we
would expect them to reduce the dataflow's memory footprint.
To evaluate the memory footprint, we record the resident set size (RSS) as
the experiment proceeds.
Figure~\ref{fig:eval:tpch-rss} presents the timelines of the RSS with and
without shared arrangements, and shows a substantial reduction (2--3$\times$)
in memory footprint when shared arrangements are present.
Without shared arrangements, the memory footprint also varies substantially
(between 60 and 120 GB) as the system creates and destroys indexes for queries
that arrive and depart, while shared arrangements remain below 40 GB.
Consequently, with shared arrangements, a given amount of system memory should
allow for more active queries.
In this experiment, ten concurrent queries are installed; workloads
with more concurrent queries may have more sharing opportunities and achieve
further memory economies.

\subsubsection{Interactive graph queries}
\subseclabel{eval}{interactive}

%
We further evaluate \sys with an open-loop experiment issuing queries against
an evolving graph.
This experiment issues the four queries used by Pacaci et
al.~\cite{Pacaci:2017:WNS:3078447.3078459} to compare relational and graph
databases: point look-ups, 1-hop look-ups, 2-hop look-ups, and 4-hop shortest
path queries (shortest paths of length at most four).
In the first three cases, the query argument is a graph node identifier, and in
the fourth case it is a pair of identifiers.
We implement each of these queries as Differential Dataflows where the query
arguments are independent collections that may be modified to introduce or
remove specific query arguments.
This query transformation was introduced in NiagaraCQ~\cite{niagaracq} and is
common in stream processors, allowing them to treat queries as a streaming
input.
The transformation can be applied to any queries presented as prepared statements.
The dataflows depend on two arrangements of the graph edges, by source and by target; they
are the only shared state among the queries.

We use a graph with 10M nodes and 64M edges, and update the graph and query
arguments of interest at experiment-specific rates.
Each graph update is the addition or removal of a random graph edge, and each
query update is the addition or removal of a random query argument (queries
are maintained while installed, rather than issued only once).
All experiments evenly divide the query updates between the four query types.

\begin{figure}[h]
  \footnotesize
  \begin{center}
    \begin{tabular}{l|r|r|r|r|r}
      \textbf{System}      & \#    & look-up & one-hop & two-hop & 4-path     \\ \hline
      Neo4j                & 32    & 9.08ms  & 12.82ms & 368ms   & 21ms       \\
      Postgres             & 32    & 0.25ms  & 1.4ms   & 29ms    & 2242ms     \\
      Virtuoso             & 32    & 0.35ms  & 1.23ms  & 11.55ms & 4.81ms     \\ \hline
      \sys, $10^0$         & 32    & 0.64ms  & 0.92ms  & 1.28ms  & 1.89ms     \\
      \sys, $10^1$         & 32    & 0.81ms  & 1.19ms  & 1.65ms  & 2.79ms     \\
      \sys, $10^2$         & 32    & 1.26ms  & 1.79ms  & 2.92ms  & 8.01ms     \\
      \sys, $10^3$         & 32    & 5.71ms  & 6.88ms  & 10.14ms & 72.20ms    \\
      \end{tabular}
    \caption{On comparable 10M node/64M edge graphs, \sys is broadly competitive with the average graph query latencies of three systems evaluated by Pacaci et al.~\protect\cite{Pacaci:2017:WNS:3078447.3078459}, and scales to higher throughput using batching. The \sys batch size is the number of concurrent queries per measurement.}
    \label{tab:graphs_interactive}
  \end{center}
\end{figure}

\paragraph{Query latency}
We run an experiment with a fixed rate of 100,000 query updates
per second, independently of how quickly \sys responds to them.
We would hope that \sys responds quickly, and that shared arrangements of the
graph structure should help reduce the latency of query updates, as \sys must
apply changes to one shared index rather than several independent ones.
Figure~\ref{fig:eval:graph_sharing} reports the latency distributions with
and without a shared arrangement of the graph structure, as a complementary
CDF.
Sharing the graph structure results in a 2--3$\times$ reduction in overall
latency in the 95\textsuperscript{th} and 99\textsuperscript{th} percentile
tail latency (from about 150ms to about 50ms).
In both cases, there is a consistent baseline latency, proportional
to the number of query classes maintained.
Shared arrangements yield latency reductions across all query classes, rather
than, \eg imposing the latency of the slowest query on all sharing
dataflows.
This validates that queries can proceed at different rates, an important
property of our shared arrangement design.

\paragraph{Update throughput}
To test how \sys's shared arrangements scale with load, we next scale
the rates of graph updates and query changes up to two million changes per
second each.
An ideal result would show that sharing the arranged graph structure
consistently reduces the computation required, thus allowing us to scale to a
higher load using fixed resources.
Figure~\ref{fig:eval:graph_boomerang} reports the 99\textsuperscript{th}
percentile latency with and without a shared graph arrangement, as a function of
offered load and on a $\log$--$\log$ scale.
The shared configuration results in reduced latencies at each offered load, and
tolerates an increased maximum load at any target latency.
At the point of saturating the server resources, shared arrangements tolerate
33\% more load than the unshared setup, although this number is much larger for
specific latencies (\eg 5$\times$ at a 20ms target).
We note that the absolute throughputs achieved in this experiment exceed the
best throughput observed by Pacaci et al.\ (Postgres, at 2,000 updates per
second) by several orders of magnitude, further illustrating the benefits of
parallel dataflow computation with shared arrangements.

\paragraph{Memory footprint}
Finally, we consider the memory footprint of the computation.
There are five uses of the graph across the four queries, but also per-query
state that is unshared, so we would expect a reduction in memory
footprint of somewhat below 4$\times$.
Figure~\ref{fig:eval:graph_memory} reports the memory footprint for the query
mix with and without sharing, for an hour-long execution.
The memory footprint oscillates around 10 GB with shared arrangements, and
around 40 GB (4$\times$ larger) without shared arrangements.
This illustrates that sharing state affords memory savings proportional to the
number of reuses of a collection.
%


\subsubsection{Comparison with other systems}
\subseclabel{eval}{comparison}

Pacaci et al.~\cite{Pacaci:2017:WNS:3078447.3078459} evaluated relational and
graph databases on the same graph queries.
\sys is a stream processor rather than a database and supports somewhat
different features, but its performance ought to be comparable to the
databases' for these queries.
We stress, however, that our implementation of the queries as Differential Dataflows requires that queries be expressed as prepared statements, a restriction the other systems do not impose.
We ran \sys experiments with a random graph comparable to the one used in Pacaci
et al.'s comparison.
\tableref{tab:graphs_interactive} reports the average latency to perform and
then await a single query in different systems, as well as the time to perform
and await batches of increasing numbers of concurrent queries for \sys.
While \sys does not provide the lowest latency for point look-ups, it does
provides excellent latencies for other queries and increases query throughput
with batch size.

\subsection{Design evaluation}
\subseclabel{eval}{designeval}

\begin{figure*}[t]
\begin{subfigure}[b]{0.329\textwidth}
\includegraphics[trim={0 0.15cm 0 0}, clip, width=\textwidth]{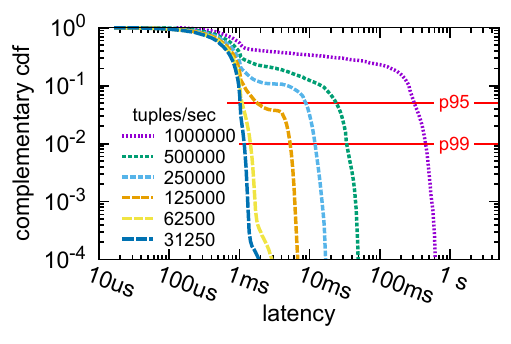}
\caption{Varying offered load with 1 worker.}
\label{fig:eval:load_varies}
\end{subfigure}
~
\begin{subfigure}[b]{0.329\textwidth}
\includegraphics[trim={0 0.15cm 0 0}, clip, width=\textwidth]{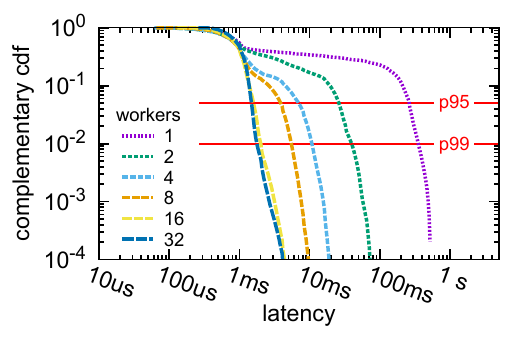}
\caption{Varying workers with fixed load.}
\label{fig:eval:strong_scaling}
\end{subfigure}
~
\begin{subfigure}[b]{0.329\textwidth}
\includegraphics[trim={0 0.15cm 0 0}, clip, width=\textwidth]{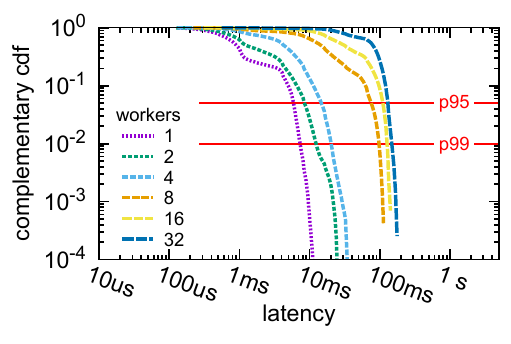}
  \caption{Varying workers and offered load}
\label{fig:eval:weak_scaling}
\end{subfigure}\\

\begin{subfigure}[b]{0.329\textwidth}
\includegraphics[trim={0 0.15cm 0 0}, clip, width=\textwidth]{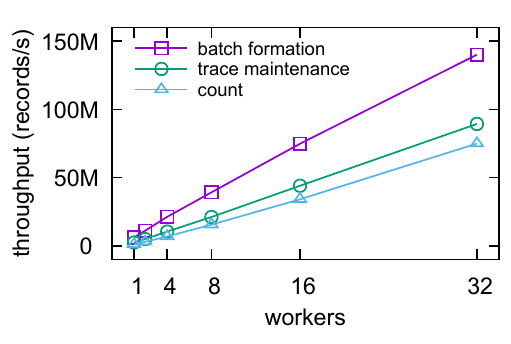}
\caption{Task throughput, varying workers.}
\label{fig:eval:throughput}
\end{subfigure}
~
\begin{subfigure}[b]{0.329\textwidth}
\includegraphics[trim={0 0.15cm 0 0}, clip, width=\textwidth]{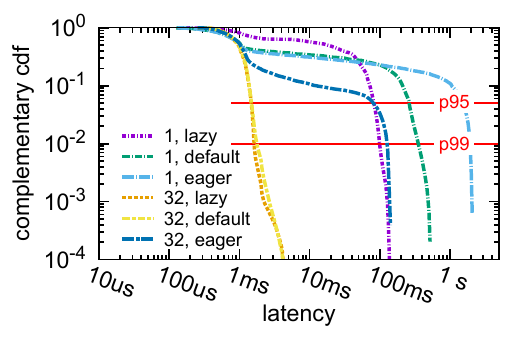}
\caption{Amortized merging levels.}
\label{fig:eval:amortized}
\end{subfigure}
~
\begin{subfigure}[b]{0.329\textwidth}
\includegraphics[trim={0 0.15cm 0 0}, clip, width=\textwidth]{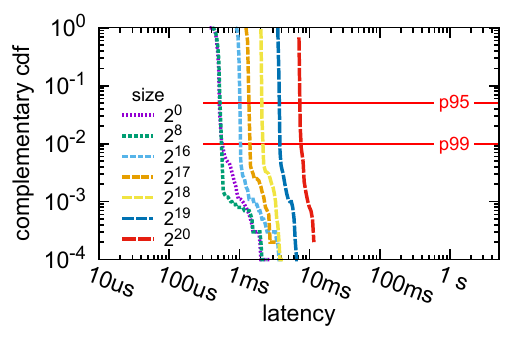}
\caption{Join with pre-arranged collection.}
\label{fig:eval:join_proportionality}
\end{subfigure}
  \caption{Microbenchmarks of our shared arrangement design suggest that our design scales well with growing parallelism ((\subref{fig:eval:strong_scaling})--(\subref{fig:eval:throughput})) and load ((\subref{fig:eval:load_varies}), (\subref{fig:eval:weak_scaling})--(\subref{fig:eval:throughput})), and that the key ideas of amortized merging ((\subref{fig:eval:amortized})) and proportional work across inputs ((\subref{fig:eval:join_proportionality})) are crucial to achieving low update latencies. (\subref{fig:eval:strong_scaling}) and (\subref{fig:eval:amortized}) generate a fixed load of 1M input records per second.}


\end{figure*}

We now perform microbenchmarks of the \texttt{arrange} operator, to evaluate its response to changes in load and resources.
In all benchmarks, we apply an \texttt{arrange} operator to a continually
changing collection of 64-bit identifiers (with 64-bit timestamp and signed
difference).
The inputs are generated randomly at the worker, and exchanged (shuffled)
by key prior to entering the arrangement.
We are primarily interested in the distribution of response latencies, as slow
edge-case behavior of an arrangement would affect this statistic most.
We report all latencies as complementary CDFs to get high resolution in the
tail of the distribution.

\paragraph{Varying load}
As update load varies, our shared arrangement design should trade latency for
throughput until equilibrium is reached.
Figure~\ref{fig:eval:load_varies} reports the latency distributions for a
single worker as we vary the number of keys and offered load in an open-loop
harness, from 10M keys and 1M updates per second, downward by factors of two.
Latencies drop as load decreases, down to the test harness's limit of one millisecond.
This demonstrates that arrangements are suitable for both low-latency and high-throughput.

\paragraph{Strong scaling}
More parallel workers should allow faster maintenance of a shared arrangement,
as the work to update it parallelizes, unless coordination frequency interferes.
Figure~\ref{fig:eval:strong_scaling} reports the latency distributions for
an increasing numbers of workers under a fixed load of 10M keys and 1M updates
per second.
As the number of workers increases, latencies decrease, especially in the tail
of the distribution: for example, the 99\textsuperscript{th} percentile latency
of 500ms with one worker drops to 6ms with eight workers.

\paragraph{Weak scaling}
%
%
Varying the number of workers while proportionately increasing the number of
keys and offered load would ideally result in constant latency.
Figure~\ref{fig:eval:weak_scaling} shows that the latency distributions do
exhibit increased tail latency, as the act of data exchange at the arrangement input
becomes more complex.
However, the latencies do stabilize at 100--200ms as the number of workers and data increase proportionately.

\paragraph{Throughput scaling}
An arrangement consists of several subcomponents: batch formation, trace
maintenance, and \eg a maintained \texttt{count} operator.
To evaluate throughput scaling, we issue batches of 10,000 updates
at each worker, repeated as soon as each batch is accepted, rather than from a rate-limited open-loop harness.
Figure~\ref{fig:eval:throughput} reports the peak throughputs as the
number of cores (and thus, workers and arrangement shards) grows.
All components scale linearly to 32 workers.

\paragraph{Amortized merging}
The amortized merging strategy is crucial for shared arrangements to achieve
low update latency, but its efficacy depends on setting the right amortization
coefficients.
Eager merging performs the least work overall but can increase tail latency.
Lazy merging performs more work overall, but should reduce the tail latency.
Ideally, \sys's default would pick a good tradeoff between common-case and
tail latencies at different scales.
Figure~\ref{fig:eval:amortized} reports the latency distributions for one and
32 workers, each with three different merge amortization coefficients: the
most eager, \sys's default, and the most lazy possible.
For a single worker, lazier settings have smaller tail latencies, but are
more often in that tail.
For 32 workers, the lazier settings are significantly better, because eager
strategies often cause workers to stall waiting for a long merge at one
worker.
The lazier settings are critical for effective strong scaling, where eager work causes multiple workers to seize up, which matches similar observations about garbage collection at scale~\cite{broom}.
\sys's default setting achieves good performance at both scales.

\paragraph{Join proportionality}
Our arrangement-aware join operator is designed to perform work proportional
to the size of the smaller of the incoming pre-arranged batch and the
state joined against (\subsecref{operators}{op-join}).
We validate this by measuring the latency distributions to
install, execute, and complete new dataflows that join collections of
varying size against a pre-existing arrangement of 10M keys.
The varying lines in Figure~\ref{fig:eval:join_proportionality} demonstrate
that the join work is indeed proportional to the small collection's size,
rather than to the (constant) 10M arranged keys.
This behavior is not possible in a record-at-a-time stream processor, which must at least examine each input record.
This behavior is possible in \sys only because the \texttt{join} operator receives as input pre-arranged batches of updates.
Query deployment in the TPC-H workload would not be fast without this property.




\vspace{1\baselineskip}
\subsection{Baseline performance on reference tasks}
\subseclabel{eval}{broad}

We also evaluate \sys against established prior work to demonstrate that \sys is competitive with and occasionally better than peer systems.
Importantly, these established benchmarks are traditionally evaluated in isolation, and are rarely able to demonstrate the benefits of shared arrangements.
Instead, this evaluation is primarily to demonstrate that \sys does not \emph{lose} baseline performance as compared to other state-of-the-art systems.
Most but not all of the peer systems in this section do maintain private indexed data in operators; this decision alone accounts for some of the gaps.
%

\vspace{.4\baselineskip}
\subsubsection{Datalog workloads}
Datalog is a relational language in which queries are sets of recursively
defined productions, which are iterated from a base set of records until no
new records are produced.
Unlike graph computation, Datalog queries tend to produce and work with
substantially more records than they are provided as input.
Several shared-memory systems for Datalog exist, including LogicBlox,
DLV~\cite{dlv}, DeALS~\cite{deals}, and several distributed systems have
recently emerged, including Myria~\cite{myria}, SociaLite~\cite{socialite},
and BigDatalog~\cite{bigdatalog}.
At the time of writing, only LogicBlox supports decremental updates to
Datalog queries, using a technique called ``transaction
repair''~\cite{DBLP:journals/corr/Veldhuizen14}.
\sys supports incremental and decremental updates to Datalog computations and
interactive top-down queries.

\vspace{.3\baselineskip}
\paragraph{Top-down (interactive) evaluation}
Datalog users commonly specify values in a query, such as
$\textit{reach}(\textit{``david''}, ?)$, to request nodes reachable from a
source node.
The ``magic set'' transformation~\cite{Bancilhon:1985:MSO:6012.15399} rewrites
such queries as bottom-up computations with a new base relation that seeds the
bottom-up derivation with query arguments; the rewritten rules derive facts only
with the participation of some seed record.
\sys, like some interactive Datalog environments, performs this work against
maintained arrangements of the non-seed relations.
We would expect this approach to be much faster than full evaluation, which
batch processors that re-index the non-seed relations (or \sys without shared
arrangements) require.
\tableref{tab:datalog_int} reports \sys's median and maximum latencies for 100
random arguments for three interactive queries on three widely-used benchmark
graphs, and the times for full evaluation of the related query, using 32
workers.
\sys's arrangements mostly reduce runtimes from seconds to milliseconds.
The slower performance for \textrm{sg(x,?)} on \textrm{grid-150} reveals that the transformation is not always beneficial, a known problem with the magic set transform.
%


\begin{figure}[t]
  \small
  \begin{center}
    \begin{tabular}{l|r|r|r|r}
      \textbf{Query} & statistic & tree-11 & grid-150    & gnp1      \\ \hline
      tc(x,?)        & increm., median    &  2.56ms & 346.28ms  & 18.29ms   \\
                     & incremental, max   &  9.05ms & 552.79ms  & 25.40ms   \\
                     & full eval. (no SA) &   0.08s & 6.18s     & 9.45s     \\ \hline
      tc(?,x)        & increm., median    & 15.63ms & 320.83ms  & 15.58ms   \\
                     & incremental, max   & 18.01ms & 541.76ms  & 23.84ms   \\
                     & full evaluation      &   0.08s & 6.18s     & 9.45s     \\ \hline
      sg(x,?)        & increm., median    & 68.34ms & 1075.11ms & 20.08ms   \\
                     & incremental, max   & 95.66ms & 2285.11ms & 26.56ms   \\
                     & full eval. (no SA) & 56.45s  & 0.60s     & 19.85s
    \end{tabular}
    \caption{\sys enables interactive computation of three Datalog
             queries (32 workers, medians and maximums over 100 queries).
             Full evaluation is required without shared arrangements.}
    \label{tab:datalog_int}
  \end{center}
\end{figure}

\vspace{.3\baselineskip}
\paragraph{Bottom-up (batch) evaluation}
\extendedtr{
In \ref{appendix:datalog}, we evaluate \sys relative to distributed and
shared-memory Datalog engines, using their benchmark queries and datasets
(``transitive closure'' and ``same generation'' on trees, grids, and random
graphs).
\tableref{tab:datalog} reports that \sys generally outperforms the
distributed systems, and is comparable to the best shared-memory engine
(DeALS).
}{
In our \theTR, we compare \sys
to distributed and shared-memory Datalog engines, using their benchmarks and
datasets (``transitive closure'' and ``same generation'' on trees,
grids, and random graphs).
Our results show that \sys generally outperforms the distributed systems and is
comparable to the best shared-memory engine (DeALS).
}



\subsubsection{Program Analysis}

Graspan~\cite{graspan} is a system built for static analysis of large code bases,
created in part because existing systems were unable to handle non-trivial
analyses at the sizes required.
%
%
%
Wang et al.\ benchmarked Graspan for two program analyses, \textit{dataflow} and
\textit{points-to}~\cite{graspan}.
The \textit{dataflow} query propagates null assignments along program assignment
edges, while the more complicated \textit{points-to} analysis develops a mutually
recursive graph of value flows, and memory and value aliasing.
We developed a full implementation of Graspan---query parsing, dataflow
construction, input parsing and loading, dataflow execution---in 179 lines of code
on top of \sys.
%

%
Graspan is designed to operate out-of-core, and explicitly manages its data on
disk.
We therefore report \sys measurements from a laptop with only 16 GB of RAM, a
limit exceeded by the \textit{points-to} analysis (which peaks around 30 GB).
The sequential access in this analysis makes standard OS swapping mechanisms
sufficient for out-of-core execution, however.
To verify this, we modify the computation to use 32-bit integers, reducing
the memory footprint below the RAM size, and find that this optimized version
runs only about 20\% faster than the out-of-core execution.

\begin{figure}[t]
  \small
  \centering
  \begin{subfigure}{\columnwidth}
    \centering
    \begin{tabular}{l|r|r|r|r}
      \textbf{System} & cores & \textit{linux} & \textit{psql}  & \textit{httpd} \\
      \hline
      SociaLite       & 4     & OOM               & OOM               & 4 hrs             \\
      Graspan         & 4     & 713.8 min         & 143.8 min         & 11.3 min          \\
      RecStep         & 20    & 430s              & 359s              & 74s               \\ \hline
      \sys            & 1     & 65.8s             & 32.0s             & 8.9s             \\
    \end{tabular}
    \caption{\textit{dataflow} query, \sys on laptop hardware.}
    \label{tab:datalog-df-1}
  \end{subfigure}

  \vspace{6pt}
  \begin{subfigure}{\columnwidth}
    \centering
    \begin{tabular}{l|r|r|r|r}
      \textbf{System} & cores & \textit{linux} & \textit{psql}  & \textit{httpd} \\
      \hline
      RecStep         & 20    & 430s              & 359s              & 74s               \\ \hline
      \sys            & 2     & 53.9s             & 25.5s             & 7.5s             \\
      \sys            & 4     & 34.8s             & 16.3s             & 4.7s             \\
      \sys            & 8     & 24.4s             & 11.2s             & 3.2s             \\
      \sys            & 16    & 20.7s             & 8.7s              & 2.5s             \\

    \end{tabular}
    \caption{\textit{dataflow} query, \sys on server hardware.}
    \label{tab:datalog-df-m}
  \end{subfigure}

  \vspace{6pt}
  \begin{subfigure}{\columnwidth}
    \centering
    \begin{tabular}{l|r|r|r|r}
      \textbf{System} & cores & \textit{linux (kernel only)} & \textit{psql}  & \textit{httpd} \\
      \hline
      \sys (med)      & 1     & 1.05ms            & 143ms             & 18.1ms            \\
      \sys (max)      & 1     & 7.34ms            & 1.21s             & 201ms
    \end{tabular}
    \caption{Times to remove each of the first 1,000 null assignments from the interactive
             top-down \textit{dataflow} query.}
    \label{tab:graspan-topdown}
  \end{subfigure}
\caption{\sys performs well for Graspan~\protect\cite{graspan} \textit{dataflow} query on three graphs.
           SociaLite and Graspan results from Wang et al.~\protect\cite{graspan}; RecStep results from Fan et al.~\protect\cite{recstep}; OOM: out of memory.}
\end{figure}

\begin{figure}[t]
  \small
  \centering
  \begin{subfigure}{\columnwidth}
    \centering
    \begin{tabular}{l|r|r|r|r}
      \textbf{System} & cores & \textit{linux} & \textit{psql}  & \textit{httpd}  \\
      \hline
      SociaLite       & 4     & OOM               & OOM               & $>$ 24 hrs        \\
      Graspan         & 4     & 99.7 min          & 353.1 min         & 479.9 min         \\
      RecStep         & 20    & 61s               & 162s              & 162s              \\ \hline
      \sys            & 1     & 241.0s            & 151.2s            & 185.6s            \\ \hline
      \sys (Opt)     & 1     & 121.1s            & 52.3s             & 51.8s             \\
    \end{tabular}
    \caption{\textit{points-to} analysis, \sys on laptop. \sys (Opt) is an optimized query.}
    \label{tab:datalog-pt-1}
  \end{subfigure}

  \vspace{6pt}
  \begin{subfigure}{\columnwidth}
    \centering
    \begin{tabular}{l|r|r|r|r}
      \textbf{System} & cores & \textit{linux} & \textit{psql}  & \textit{httpd}  \\
      \hline
      RecStep         & 20    & 61s               & 162s              & 162s              \\ \hline
      \sys            & 2     & 230.0s            & 134.4s            & 145.3s            \\
      \sys            & 4     & 142.6s            & 73.3s             & 80.2s             \\
      \sys            & 8     & 86.0s             & 40.9s             & 44.9s             \\
      \sys            & 16    & 59.8s             & 24.0s             & 27.5s             \\ \hline
      \sys (Opt)     & 2     & 125.2s            & 53.1ss            & 46.0s             \\
      \sys (Opt)     & 4     & 89.8s             & 30.8s             & 26.7s             \\
      \sys (Opt)     & 8     & 57.4s             & 18.0s             & 15.1s             \\
      \sys (Opt)     & 16    & 43.1s             & 11.2s             & 9.1s             \\
    \end{tabular}
    \caption{\textit{points-to} analysis, \sys on server. \sys (Opt) is an optimized query.}
    \label{tab:datalog-pt-m}
  \end{subfigure}
  \caption{\sys performs well for Graspan~\protect\cite{graspan} program analyses on three graphs.
           SociaLite and Graspan results from Wang et al.~\protect\cite{graspan}; RecStep results from Fan et al.~\protect\cite{recstep}; OOM: out of memory.}

\end{figure}

%
\tableref{tab:datalog-df-1} and \tableref{tab:datalog-pt-1} show the running times
reported by Wang et al.\ compared to those \sys achieves.
For both queries, we see a substantial improvement (from 24$\times$ to
650$\times$).
%
%
The \textit{points-to} analysis is dominated by the determination of a large
relation (value aliasing) that is used only once.
This relation can be optimized out, as value aliasing is eventually restricted by
dereferences, and this restriction can be performed before forming all value
aliases.
This optimization results in a more efficient computation, but one that reuses some relations several (five) times; the benefits of the improved plan may not be realized by systems without shared arrangements.
\tableref{tab:datalog-pt-1} reports the optimized running times as (Opt).

In \tableref{tab:datalog-df-m} and \tableref{tab:datalog-pt-m} we
also report the runtimes of \sys on these program analysis tasks on server hardware
(with the same hardware configuration as previous sections) and compare them to
RecStep~\cite{recstep}, a state-of-the-art parallel datalog engine.
For all queries, \sys matches or outperforms RecStep running times even
when it is configured to utilize a smaller number of CPU cores.

\paragraph{Top-down evaluation}
Both \textit{dataflow} and \textit{points-to} can be transformed to support
interactive queries instead of batch computation.
\tableref{tab:graspan-topdown} reports the median and maximum latencies to remove
the first 1,000 null assignments from the completed \textit{dataflow} analysis and
correct the set of reached program locations.
While there is some variability, the timescales are largely interactive and
suggest the potential for an improved developer experience.

\subsubsection{Batch graph computation}

We evaluate \sys on standard batch iterative graph computations
on three standard social networks:
LiveJournal, Orkut, and Twitter.
\extendedtr{Results for Twitter are in
\tableref{tab:graphs-tw}. Results for LiveJournal (\tableref{tab:graphs-lj})
and Orkut (\tableref{tab:graphs-ok}) are in Section~\ref{appendix:graphs}.
}{We report results for the largest of the graphs, Twitter, in
\tableref{tab:graphs-tw}; results for LiveJournal and Orkut are available
in our \theTR.
}
%
%
%
Following prior work~\cite{bigdatalog} we use the tasks of single-source reachability (reach), single-source shortest paths (sssp), and undirected connectivity (wcc).
For the first two problems we start from the first graph vertex with any outgoing edges (each reaches a majority of the graph).

We separately report the times required to form the forward and reverse edge arrangements, with the former generally faster than the latter as the input graphs are sorted by the source as in the forward index.
The first two problems require a forward index and undirected connectivity requires indices in both directions, and we split the results accordingly.
We include measurements by Shkapsky et al.~\cite{bigdatalog} for several other systems.
%
We also report running times for simple single-threaded implementations that are not required to follow the same algorithms.
For example, for undirected connectivity we use the union-find algorithm rather than label propagation, which outperforms all systems except \sys at 32 cores.
We also include single-threaded implementations that replace the arrays storing per-node state with hash maps, as they might when the graph identifiers have not been pre-processed into a compact range; the graphs remain densely packed and array indexed.

\sys is consistently faster than the other systems---Myria~\cite{myria},
BigDatalog~\cite{bigdatalog}, SociaLite~\cite{socialite},
GraphX~\cite{graphx}, RecStep~\cite{recstep}, and RaSQL~\cite{rasql}---but
is substantially less efficient than
purpose-written single-threaded code applied to pre-processed graph data.
Such pre-processing is common, as it allows use of efficient static arrays, but
it prohibits more general vertex identifiers or graph updates.
When we amend our purpose-built code to use a hash table instead of an array,
\sys becomes competitive between two and four cores.
These results are independent of shared arrangements, but indicate that \sys's
arrangement-aware implementation does not impose any undue cost on computations
without sharing.

\begin{figure}[t]
  \scriptsize
  \begin{center}
    \begin{tabular}{l|r|r|r|r|r|r}
      \textbf{System}       & cores & index-f &   reach &    sssp & index-r &      wcc \\ \hline
      Single thread         & 1     &       - &  14.89s &  14.89s &       - &   33.99s \\
      \; w/hash map         & 1     &       - & 192.01s & 192.01s &       - &  404.19s \\ \hline
      BigDatalog            & 120   &       - &    125s &    260s &       - &     307s \\
      Myria                 & 120   &       - &    102s &   1593s &       - &    1051s \\
      SociaLite             & 120   &       - &    755s &     OOM &       - &      OOM \\
      GraphX                & 120   &       - &   3677s &   6712s &       - &   12041s \\
      RaSQL                 & 120   &       - &     45s &     81s &       - &     108s \\
      RecStep               & 20    &       - &    174s &    243s &       - &     501s \\ \hline
      \sys                  & 1     & 162.41s & 256.77s & 310.63s & 312.31s &  800.05s \\
      \sys                  & 2     &  99.74s & 131.50s & 159.93s & 164.12s &  417.20s \\
      \sys                  & 4     &  49.46s &  64.31s &  77.27s &  81.67s &  200.28s \\
      \sys                  & 8     &  27.99s &  33.68s &  40.24s &  43.20s &  101.42s \\
      \sys                  & 16    &  18.04s &  17.40s &  20.99s &  24.73s &   51.83s \\
      \sys                  & 32    &  12.69s &  11.36s &  10.97s &  14.44s &   27.48s \\
    \end{tabular}
    \caption{System performance on various tasks on the 42M node, 1.4B edge \texttt{twitter} graph. \sys does not share any arrangements here, but the sharing infrastructure does not harm performance.}
    \label{tab:graphs-tw}
  \end{center}
\end{figure}



\section{Conclusions}
\seclabel{concl}

We described shared arrangements, detailed their design and implementation in \sys, and showed how they yield improved performance for interactive analytics against evolving data.
Shared arrangements enable interactive, incrementally maintained queries against streams by sharing sharded indexed state between operators within or across dataflows.
Multiversioning the shared arrangement is crucial to provide high throughput, and sharding the arrangement achieves parallel speedup.
Our implementation in \sys installs new queries against a stream in milliseconds, reduces the processing and space cost of multiple data\-flows, and achieves high performance on a range of workloads.
In particular, we showed that shared arrangements improve performance for workloads with concurrent queries, such as a streaming TPC-H workload with interactive analytic queries and concurrent graph queries.

Shared arrangements rely on features shared by time-aware data\-flow systems, and the idiom of a single-writer, multiple-reader index should apply to several other popular dataflow systems.
We left undiscussed topics like persistence and availability.
As a deterministic data processor, \sys is well-suited to active-active replication for availability in the case of failures.
In addition, the immutable LSM layers backing arrangements are appropriate for persistence, and because of their inherent multiversioning can be persisted asynchronously, off of the critical path.

\sys~\cite{differential} is the reference open-source implementation of Differential Dataflow, and is in use by several research groups and companies.
%
%


\paragraph{Acknowledgements}
We thank Natacha Crooks, Jon Howell, Michael Isard, and the MIT PDOS group for their valuable feedback, and the many users of \sys who exercised and informed its design. This work was partly supported by Google, VMware, and the Swiss National Science Foundation. Andrea Lattuada is supported by a Google PhD fellowship.

\newpage

{
  \bibliography{refs}
}

\ifextendedtr
\clearpage
\appendix
\else
\fi

\renewcommand{\thesection}{Appendix \Alph{section}}

\ifextendedtr

\extendedtr{

\section{Compaction Theorems}
\label{appendix:theorems}

Let $F$ be an antichain of partially ordered times (a ``frontier''). Writing $f \ge F$ to mean $f$ is greater than some element of $F$, we will say that two times $t_1$ and $t_2$ are ``indistinguishable as of $F$'', written $t_1 \equiv_F t_2$, when
\begin{align*}
t_1 \equiv_F t_2 & \; \textrm{ when } \; \forall_{f \ge F} (t_1 \le f \; \textrm{iff} \; t_2 \le f)
\end{align*}

As performed in the Naiad prototype, we can determine a representative for a time $t$ relative to a set $F$ using the least upper bound ($\wedge$) and greatest lower bound ($\vee$) operations of the lattice of times, taking the greatest lower bound of the set of least upper bounds of $t_i$ and elements of $F$:
\begin{align*}
rep_F(t) & :=  \vee_{f \in F} (t \wedge f)
\end{align*}
The $rep_F$ function finds a representative that is both correct ($t$ and $rep_F(t)$ compare identically to times greater than $F$) and optimal (two times comparing identically to all times greater than $F$ map to the same representative).

The formal properties of $rep_F$ rely on properties of the $\wedge$ and $\vee$ operators, that they are respectively upper and lower bounds of their arguments, and their optimality:
\begin{align*}
b \le a \textrm{ and } c \le a & \; \rightarrow \; (b \wedge c) \le a & (lub) \\
a \le b \textrm{ and } a \le c & \; \rightarrow \; a \le (b \vee c) & (glb)
\end{align*}
In particular, we will repeatedly use that if $t_1 \le (t_2 \vee f)$ for all $f \in F$, then $t_1 \le rep_F(t_2)$.

\begin{theorem}[Correctness]
For any lattice element $t$ and set $F$ of lattice elements, $t \equiv_F rep_F(t)$.
\end{theorem}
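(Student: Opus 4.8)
The plan is to unfold the definition of $\equiv_F$ and prove the two implications separately. Fix an arbitrary $f$ with $f \ge F$; recall this means there is some $g_0 \in F$ with $g_0 \le f$, and recall the paper's convention that $\wedge$ is the least upper bound and $\vee$ the greatest lower bound, so that $rep_F(t) = \vee_{f' \in F}(t \wedge f')$ is the greatest lower bound of the elements $t \wedge f'$ as $f'$ ranges over $F$.

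First I would record the inequality that does all the work for the ``$\Leftarrow$'' direction: $t \le rep_F(t)$ holds unconditionally. Indeed, for every $f' \in F$ the join $t \wedge f'$ is an upper bound of $t$, so $t \le t \wedge f'$; since $rep_F(t)$ is the greatest lower bound of the family $\{\, t \wedge f' : f' \in F \,\}$ and $t$ is a lower bound of that family, the $(glb)$ property (applied across $F$) gives $t \le rep_F(t)$. Hence whenever $rep_F(t) \le f$ we get $t \le rep_F(t) \le f$ by transitivity, which is the ``$\Leftarrow$'' half, and it does not even use $f \ge F$.

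For the ``$\Rightarrow$'' direction, suppose $t \le f$. Fix $g_0 \in F$ with $g_0 \le f$, which exists because $f \ge F$. Then $f$ is a common upper bound of $t$ and $g_0$, so the $(lub)$ optimality of the join gives $t \wedge g_0 \le f$. On the other hand, $rep_F(t)$, being a lower bound of the family $\{\, t \wedge f' : f' \in F \,\}$, satisfies $rep_F(t) \le t \wedge g_0$. Chaining the two, $rep_F(t) \le t \wedge g_0 \le f$, which closes this direction.

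Together the two halves give $t \le f \iff rep_F(t) \le f$ for every $f \ge F$, i.e.\ $t \equiv_F rep_F(t)$. I expect no real mathematical obstacle: the only points needing care are keeping the non-standard meaning of $\wedge$ and $\vee$ straight, correctly extracting the witness $g_0$ from the hypothesis $f \ge F$, and---if $F$ is not assumed finite---justifying that $\vee_{f' \in F}$ really is the greatest lower bound of the whole family rather than just of pairs (a non-issue for the finite antichains that actually arise as frontiers, where the two-element $(glb)$ property can simply be iterated).
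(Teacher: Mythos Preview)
Your proof is correct and follows essentially the same approach as the paper: both directions are handled exactly as in the paper's argument, using $t \le t\wedge f'$ together with the $(glb)$ property for one direction and the witness $g_0\in F$ with the $(lub)$ property for the other. The only differences are cosmetic---you present the two implications in the opposite order and add the (apt) remark about infinite $F$.
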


\begin{proof}
We prove both directions of the implication in $\equiv_F$ separately, for all $f \ge F$. First assume $t \le f$. By assumption, $f$ is greater than some element $f'$ of $F$, and so $t \wedge f' \le f$ by the ($lub$) property. As a lower bound, $rep_F(t) \le t \wedge f'$ for each $f' \in F$, and by transitivity $ rep_F(t) \le f$. Second assume $rep_F(t) \le f$. Because $t \le (t \wedge f')$ for all $f' \in F$, then $t \le rep_F(t)$ by the ($glb$) property and by transitivity $t \le f$.
\end{proof}

At the same time, $rep_F$ is optimal in that two equivalent elements will be mapped to the same representative.

\begin{theorem}[Optimality]
For any lattice elements $t_1$ and $t_2$ and set $F$ of lattice elements, if $t_1 \equiv_F t_2$ then $rep_F(t_1) = rep_F(t_2)$.
\end{theorem}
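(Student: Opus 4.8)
The plan is to deduce equality of the representatives from the stronger, cleaner fact that if $t_1 \equiv_F t_2$ then the two joins agree termwise: $t_1 \wedge f = t_2 \wedge f$ for every $f \in F$. Granting that, the theorem is immediate, since $rep_F(t_1) = \vee_{f \in F}(t_1 \wedge f)$ and $rep_F(t_2) = \vee_{f \in F}(t_2 \wedge f)$ are then greatest lower bounds of the very same family of lattice elements, hence equal.

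The one structural observation I would isolate first is that for any time $t$ and any $f \in F$, the join $t \wedge f$ is beyond $F$. Indeed $\wedge$ is an upper bound of its arguments, so $f \le t \wedge f$, and since $f \in F$ this gives $t \wedge f \ge F$. The significance is that $t \wedge f$ is exactly the sort of element the relation $\equiv_F$ quantifies over, so we are entitled to instantiate the hypothesis $t_1 \equiv_F t_2$ at it.

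With that in hand, fix $f \in F$ and show $t_1 \wedge f \le t_2 \wedge f$. From $t_2 \le t_2 \wedge f$ (upper bound) together with $t_2 \wedge f \ge F$, the hypothesis $t_1 \equiv_F t_2$ forces $t_1 \le t_2 \wedge f$; and of course $f \le t_2 \wedge f$. So $t_1$ and $f$ both lie below $t_2 \wedge f$, and the optimality of the least upper bound — the $(lub)$ property — gives $t_1 \wedge f \le t_2 \wedge f$. Exchanging the roles of $t_1$ and $t_2$ yields the opposite inequality, so $t_1 \wedge f = t_2 \wedge f$ for every $f \in F$, and the theorem follows as above. (Equivalently, this shows $rep_F(t_1) \le t_2 \wedge f$ for each $f$, hence $rep_F(t_1) \le rep_F(t_2)$ by the iterated greatest-lower-bound property already used in the Correctness proof, and symmetrically.)

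I expect the only genuinely non-routine point to be spotting the structural observation that $t \wedge f \ge F$; everything after it is a mechanical application of the two displayed bound/optimality properties of $\wedge$ and $\vee$, mirroring the Correctness argument. Note also that, as in that argument, the antichain hypothesis on $F$ plays no role — only that each element of $F$ can be joined with arbitrary times.
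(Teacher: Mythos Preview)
Your proof is correct and shares with the paper the decisive observation that $t \wedge f \ge F$ for each $f \in F$, which is what licenses instantiating $\equiv_F$ at the joins. The difference is in what you do with it: the paper, having already proved Correctness, invokes it as a lemma to get $rep_F(t_2) \le t_1 \wedge f$ directly (since $rep_F(t_2) \equiv_F t_2$ and $t_2 \le t_1 \wedge f$), then applies the $(glb)$ property and antisymmetry. You instead stay at the level of the individual joins and establish the stronger termwise equality $t_1 \wedge f = t_2 \wedge f$ via the $(lub)$ property, from which equality of the meets is immediate. Your route is self-contained (no appeal to Correctness) and yields a slightly sharper intermediate fact; the paper's route is a line shorter by reusing what was just proved. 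Both are clean, and your parenthetical alternative is essentially the paper's argument with the Correctness invocation unwound.
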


\begin{proof}
For all $f \in F$ we have both that $t_1 \le t_1 \wedge f$ and $f \le t_1 \wedge f$, the latter implying that $t_1 \wedge f \ge F$. By our assumption, $t_2$ agrees with $t_1$ on times greater than $F$, making $t_2 \le t_1 \wedge f$ for all $f \in F$. By correctness, $rep_F(t_2)$ agrees with $t_2$ on times greater than $F$, which includes $t_1 \wedge f$ for $f \in F$ and so $rep_F(t_2) \le t_1 \wedge f$ for all $f \in F$. Because $rep_F(t_2)$ is less or equal to each term in the greatest lower bound definition of $rep_F(t_1)$, it is less or equal to $rep_F(t_1)$ itself. The symmetric argument proves that $rep_F(t_1) \le rep_F(t_2)$, which implies that the two are equal (by antisymmetry).
\end{proof}

\section{Relational computations}
\label{appendix:relational}

\tableref{tab:tpch-stream-logical} reports throughput in tuples per second for \cite{hotdog} and \sys on the scale factor 10 TPC-H workload with logical batches of 100,000 elements. \sys has a generally higher and more consistent rate, though is less efficient on lighter queries (q04, q06, and q14); \sys performs no pre-aggregation, which could improve its rates for logically batched queries. For 32 workers, almost all rates are above 10 million updates per second, which correspond to latencies below 10ms between reports.

\begin{figure}
\scriptsize
  \begin{center}
    \begin{tabular}{r|r|r|r}
        query    & DBToaster  & \sys (w=1)  & \sys (w=32) \\ \hline
        TPC-H 01 & 4,372,480    &     9,341,713 &    31,283,993 \\
        TPC-H 02 & 691,260     &     4,388,761 &    29,651,632 \\
        TPC-H 03 & 4,580,003    &    11,049,606 &    37,263,673 \\
        TPC-H 04 & 9,752,380    &     9,046,854 &    30,886,269 \\
        TPC-H 05 & 509,490     &     5,802,513 &    27,952,246 \\
        TPC-H 06 & 101,327,791  &    33,090,863 &    65,335,474 \\
        TPC-H 07 & 646,018     &     7,551,628 &    30,962,626 \\
        TPC-H 08 & 221,020     &     4,949,412 &    28,230,062 \\
        TPC-H 09 & 76,296      &     2,932,421 &    18,119,469 \\
        TPC-H 10 & 5,964,290    &     9,708,371 &    25,037,510 \\
        TPC-H 11 & 591,716     &     1,720,655 &     1,749,464 \\
        TPC-H 12 & 7,469,474    &    11,258,702 &    33,975,983 \\
        TPC-H 13 & 474,765     &     1,446,223 &    16,792,703 \\
        TPC-H 14 & 53,436,252   &    21,908,762 &    38,843,085 \\
        TPC-H 15 & 964        &     5,057,397 &    23,122,916 \\
        TPC-H 16 & 58,721      &     4,435,818 &    23,495,608 \\
        TPC-H 17 & 131,964     &     5,218,907 &    25,888,103 \\
        TPC-H 18 & 971,313     &     5,854,293 &    29,574,347 \\
        TPC-H 19 & 8,776,165    &    22,696,357 &    36,393,109 \\
        TPC-H 20 & 1,871,407    &    16,089,949 &    46,456,453 \\
        TPC-H 21 & 407,540     &     1,968,771 &    10,928,516 \\
        TPC-H 22 & 815,903     &     1,843,397 &    15,233,935 \\
    \end{tabular}
    \caption{Streaming update rates (in tuples per second) for the 22 TPC-H queries at scale factor 10, with logical batching of 100,000 elements at the same time. Elapsed times for DBToaster are for one thread, and are reproduced from~\cite{hotdog}. For \sys we report both one worker and 32 worker rates.}

    \label{tab:tpch-stream-logical}
  \end{center}
\end{figure}

\tableref{tab:tpch-batch} reports elapsed times for \sys applied to the scale-factor 10 TPC-H workload. We also reproduce several measurements from \cite{flare} for other systems. All are executed with a single thread. \sys used as a batch processor is not as fast as the best systems (HyPer and Flare), but is faster than other popular frameworks.
\begin{figure}
\scriptsize
  \begin{center}
    \begin{tabular}{r|r|r|r|r|r}
        query    & Postgres & SparkSQL & HyPer & Flare &   \sys \\ \hline
        TPC-H 01 &   241,404 &    18,219 &   603 &   530 &   7,789 \\
        TPC-H 02 &     6,649 &    23,741 &    59 &   139 &   2,426 \\
        TPC-H 03 &    33,721 &    47,816 & 1,126 &   532 &   5,948 \\
        TPC-H 04 &     7,936 &    22,630 &   842 &   521 &   8,550 \\
        TPC-H 05 &    30,043 &    51,731 &   941 &   748 &  14,001 \\
        TPC-H 06 &    23,358 &     3,383 &   232 &   198 &   1,185 \\
        TPC-H 07 &    32,501 &    31,770 &   943 &   830 &  12,029 \\
        TPC-H 08 &    29,759 &    63,823 &   616 & 1,525 &  19,667 \\
        TPC-H 09 &    64,224 &    88,861 & 1,984 & 3,124 &  27,873 \\
        TPC-H 10 &    33,145 &    42,216 &   967 & 1,436 &   4,559 \\
        TPC-H 11 &     7,093 &     3,857 &   131 &    56 &   1,534 \\
        TPC-H 12 &    37,880 &    17,233 &   501 &   656 &   4,458 \\
        TPC-H 13 &    31,242 &    28,489 & 3,625 & 3,727 &   3,893 \\
        TPC-H 14 &    22,058 &     7,403 &   330 &   278 &   1,695 \\
        TPC-H 15 &    23,133 &    14,542 &   253 &   302 &   1,591 \\
        TPC-H 16 &    13,232 &    23,371 & 1,399 &   620 &   2,238 \\
        TPC-H 17 &   155,449 &    70,944 &   563 & 2,343 &  17,750 \\
        TPC-H 18 &    90,949 &    53,932 & 3,703 &   823 &   9,426 \\
        TPC-H 19 &    29,452 &    13,085 & 1,980 &   909 &   2,444 \\
        TPC-H 20 &    65,541 &    31,226 &   434 &   870 &   4,658 \\
        TPC-H 21 &   299,178 &   128,910 & 1,626 & 1,962 &  29,363 \\
        TPC-H 22 &    11,703 &    10,030 &   180 &   177 &   2,819 \\
    \end{tabular}
    \caption{Elapsed milliseconds for the 22 TPC-H queries at scale factor 10, each using a single core. Elapsed times for the four other systems are reproduced from~\cite{flare}. \sys used as a batch processor is not as fast as the best systems (HyPer and Flare), but is faster than other popular frameworks.}

    \label{tab:tpch-batch}
  \end{center}
\end{figure}


\section{Graph computations}
\label{appendix:graphs}

We now report on the performance of \sys used as a batch graph processor, where the inputs are static collections of edges defining a directed graph. Following \cite{bigdatalog} we use the tasks of single-source reachability (reach), single-source shortest paths (sssp), and undirected connectivity (wcc). For the first two problems we start the process from the first graph vertex with any outgoing edges (each reaches a majority of the graph).

We separately report the times required to form the forward and reverse edge arrangements, with the former generally faster than the latter as the graphs are made available sorted by the source as in the forward index. The reported times for the three problems are then any further time required after the index construction, where the first two problems require a forward index and undirected connectivity requires indices in both directions.

We report times for three graphs:  \texttt{livejournal} in \tableref{tab:graphs-lj}, \texttt{orkut} in \tableref{tab:graphs-ok}, and \texttt{twitter} in \tableref{tab:graphs-tw}. We also reproduce measurements reported in \cite{bigdatalog} for several other systems. We include running times for simple single-threaded implementations that are not required to follow the same algorithms. For example, for undirected connectivity we use the union-find algorithm rather than label propagation, which outperforms all of the measurements reported here. We also include the same times when the single-threaded implementations replace the arrays storing per-node state with hash maps, as they might when the graph identifiers have not be pre-processed to be in a compact range; the graphs remain densely packed and array indexed.

The measurements indicate that \sys is at least as capable as these peer systems at graph processing, often with just a single core. \sys lags well behind graph processors that can rely on dense integer identifiers and direct array access, but appears likely to be competetive with graph processors that cannot rely on dense integer identifiers.
\begin{figure}[t]
  \tiny
  \begin{center}
    \begin{tabular}{l|r|r|r|r|r|r}
      \textbf{System}       & cores & index-f &   reach &    sssp & index-r &      wcc \\ \hline
      Single thread         & 1     &       - &   0.40s &   0.40s &       - &    0.29s \\
      \; w/hash map         & 1     &       - &   4.38s &   4.38s &       - &    8.90s \\ \hline
      BigDatalog            & 120   &       - &     17s &     53s &       - &      27s \\
      Myria                 & 120   &       - &      5s &     70s &       - &      39s \\
      SociaLite             & 120   &       - &     52s &    172s &       - &      54s \\
      GraphX                & 120   &       - &     36s &    311s &       - &      59s \\
      RecStep               & 20    &       - &     14s &     19s &       - &      39s \\ \hline
      \sys                  & 1     &   4.39s &   8.50s &  13.14s &   7.56s &   23.97s \\
      \sys                  & 2     &   2.49s &   4.33s &   6.71s &   4.01s &   12.95s \\
      \sys                  & 4     &   1.39s &   2.31s &   3.58s &   2.06s &    6.29s \\
      \sys                  & 8     &   0.74s &   1.20s &   1.79s &   1.03s &    3.41s \\
      \sys                  & 16    &   0.54s &   0.62s &   0.90s &   0.58s &    1.71s \\
      \sys                  & 32    &   0.55s &   0.51s &   0.59s &   0.41s &    0.90s \\
    \end{tabular}
    \caption{System performance on various tasks on the 4.8M node, 68M edge \texttt{livejournal} graph.}
    \label{tab:graphs-lj}
  \end{center}
\end{figure}

\begin{figure}[t]
  \scriptsize
  \begin{center}
    \begin{tabular}{l|r|r|r|r|r|r}
      \textbf{System}       & cores & index-f &   reach &    sssp & index-r &      wcc \\ \hline
      Single thread         & 1     &       - &   0.46s &   0.46s &       - &    0.52s \\
      \; w/hash map         & 1     &       - &  11.56s &  11.56s &       - &   19.00s \\ \hline
      BigDatalog            & 120   &       - &     20s &     39s &       - &      33s \\
      Myria                 & 120   &       - &      6s &     44s &       - &      57s \\
      SociaLite             & 120   &       - &     67s &    106s &       - &      78s \\
      GraphX                & 120   &       - &     48s &     67s &       - &      53s \\
      RaSQL                 & 120   &       - &     11s &     16s &       - &      19s \\
      RecStep               & 20    &       - &     19s &     25s &       - &      43s \\ \hline
      \sys                  & 1     &  14.02s &  20.33s &  24.65s &  21.27s &   47.79s \\
      \sys                  & 2     &   7.92s &  10.29s &  13.06s &  11.49s &   25.02s \\
      \sys                  & 4     &   4.25s &   5.34s &   6.21s &   5.73s &   12.38s \\
      \sys                  & 8     &   2.37s &   2.68s &   3.34s &   3.03s &    6.29s \\
      \sys                  & 16    &   1.43s &   1.47s &   1.60s &   1.69s &    3.30s \\
      \sys                  & 32    &   1.22s &   1.11s &   0.87s &   1.05s &    1.75s \\
    \end{tabular}
    \caption{System performance on various tasks on the 3M node, 117M edge \texttt{orkut} graph.}
    \label{tab:graphs-ok}
  \end{center}
\end{figure}

\begin{figure}[t]
  \scriptsize
  \begin{center}
    \begin{tabular}{l|r|r|r|r|r|r}
      \textbf{System}       & cores & index-f &   reach &    sssp & index-r &      wcc \\ \hline
      Single thread         & 1     &       - &  14.89s &  14.89s &       - &   33.99s \\
      \; w/hash map         & 1     &       - & 192.01s & 192.01s &       - &  404.19s \\ \hline
      BigDatalog            & 120   &       - &    125s &    260s &       - &     307s \\
      Myria                 & 120   &       - &    102s &   1593s &       - &    1051s \\
      SociaLite             & 120   &       - &    755s &     OOM &       - &      OOM \\
      GraphX                & 120   &       - &   3677s &   6712s &       - &   12041s \\
      RaSQL                 & 120   &       - &     45s &     81s &       - &     108s \\
      RecStep               & 20    &       - &    174s &    243s &       - &     501s \\ \hline
      \sys                  & 1     & 162.41s & 256.77s & 310.63s & 312.31s &  800.05s \\
      \sys                  & 2     &  99.74s & 131.50s & 159.93s & 164.12s &  417.20s \\
      \sys                  & 4     &  49.46s &  64.31s &  77.27s &  81.67s &  200.28s \\
      \sys                  & 8     &  27.99s &  33.68s &  40.24s &  43.20s &  101.42s \\
      \sys                  & 16    &  18.04s &  17.40s &  20.99s &  24.73s &   51.83s \\
      \sys                  & 32    &  12.69s &  11.36s &  10.97s &  14.44s &   27.48s \\
    \end{tabular}
    \caption{System performance on various tasks on the 42M node, 1.4B edge \texttt{twitter} graph.}
    \label{tab:graphs-tw}
  \end{center}
\end{figure}

\section{Datalog computations}
\label{appendix:datalog}

\begin{figure}[ht]
  \tiny
  \begin{center}
    \begin{tabular}{l|r|r|r|r|r|r|r}
      \textbf{System} & cores & \textit{tc}(t) & \textit{tc}(g) & \textit{tc}(r) & \textit{sg}(t) & \textit{sg}(g) & \textit{sg}(r) \\
      \hline
      BigDatalog    & 120 & 49s     & 25s     & 7s      & 53s     & 34s   & 72s     \\
      Spark         & 120 & 244s    & OOM     & 63s     & OOM     & 1955s & 430s    \\
      Myria         & 120 & 91s     & 22s     & 50s     & 822s    & 5s    & 436s    \\
      SociaLite     & 120 & DNF     & 465s    & 654s    & OOM     & 17s   & OOM     \\ \hline
      LogicBlox     & 64  & NR      & 24420s  & 913s    & 58732s  & 326s  & 3363s   \\
      DLV           & 1   & NR      & 13127s  & 9272s   & OOM     & 105s  & 48039s  \\
      DeALS         & 1   & NR      & 148s    & 321s    & 1309s   & 7.6s  & 2926s   \\
      DeALS         & 64  & NR      & 5s      & 12s     & 48s     & 0.35s & 79s     \\ \hline
      \sys          & 1   &  98.26s & 132.23s & 210.25s & 1210.78s& 4.43s & 482.91s \\
      \sys          & 2   &  53.42s &  68.13s & 111.98s & 652.74s & 2.76s & 253.80s \\
      \sys          & 4   &  27.85s &  34.42s &  57.69s & 325.24s & 1.63s & 125.00s \\
      \sys          & 8   &  15.37s &  17.97s &  30.90s & 173.96s & 1.06s &  66.10s \\
      \sys          & 16  &   9.63s &   9.74s &  16.66s &  93.47s & 0.69s &  35.44s \\
      \sys          & 32  &   7.18s &   6.18s &   9.45s &  56.45s & 0.60s &  19.85s
    \end{tabular}
    \caption{System performance on various Datalog problems and graphs. Times for the first four systems are reproduced from~\cite{bigdatalog}. NR indicates the measurements were not reported, DNF indicates a run that lasted more than 24 hours, and OOM indicates the system terminated due to lack of memory.}
    \label{tab:datalog}
  \end{center}
\end{figure}

\tableref{tab:datalog} reports elapsed seconds first for distributed systems, then for single-machine systems, and then for \sys at varying numbers of workers. The workloads are ``transitive closure'' (\textit{tc}) and ``same generation'' (\textit{sg}) on supplied graphs that are trees (\textit{t}), grids (\textit{g}), and random graphs (\textit{r}).

\sys is generally competitive with the best of the specialized Datalog systems (here: DeALS), and generally out-performs the distributed data processors. BigDatalog competes well on transitive closure due to an optimization for linear queries where it broadcasts its input dataset to all workers; this works well with small inputs, as here, but is not generally a robust strategy.

}{}

\else
\fi

\end{document}